\RequirePackage{fix-cm}
\documentclass[aps,prx,reprint,superscriptaddress,longbibliography,nofootinbib,floatfix]{revtex4-2}

\usepackage{amsmath,amssymb,mathtools,bm}
\usepackage{amsthm}
\usepackage[T1]{fontenc}
\usepackage{newtxtext,newtxmath}
\usepackage{graphicx}
\usepackage{quantikz}
\usepackage{microtype}
\usepackage{placeins}
\usepackage[colorlinks=true,linkcolor=blue,citecolor=blue,urlcolor=blue]{hyperref}
\hypersetup{
  pdftitle={Entangling Topological Invariants},
  pdfauthor={Kazuki Ikeda and Yaron Oz},
  pdfsubject={Subsystem-gluing quotients and entangling topology of occupied-state bundles},
  pdfkeywords={Chern topology, Chern number, non-Abelian geometry, subsystem structure, topological pumping}
}

\newtheorem{theorem}{Theorem}
\newtheorem{proposition}{Proposition}
\newcommand{\ii}{\mathrm{i}}
\newcommand{\dd}{\mathrm{d}}
\newcommand{\Tr}{\operatorname{Tr}}
\newcommand{\cF}{\mathcal F}
\newcommand{\cA}{\mathcal A}
\newenvironment{widefigure}{%
  \par\onecolumngrid\vspace{6pt}%
}{%
  \par\vspace{6pt}\twocolumngrid%
}

\begin{document}
\raggedbottom

\title{Entangling Topological Invariants}
\author{Kazuki Ikeda}
\email{Kazuki.Ikeda@umb.edu}
\affiliation{Department of Physics, University of Massachusetts, Boston, MA 02125, USA}
\affiliation{Center for Nuclear Theory, Department of Physics and Astronomy, Stony Brook University, Stony Brook, NY 11794, USA}
\author{Yaron Oz}
\email{yaronoz@tauex.tau.ac.il}
\affiliation{School of Physics and Astronomy, Tel-Aviv University, Tel-Aviv 69978, Israel}

\begin{abstract}
An isolated occupied multiplet may admit local tensor-product descriptions without a globally consistent subsystem structure. We characterize the obstruction by comparing the transition functions of the occupied multiplet with those generated by independent basis changes in the two candidate subsystems. When a decomposition into rank-one sectors over a closed surface is specified, the resulting quotient removes row- and column-additive Chern data and yields mixed Chern classes. Momentum-dependent mixing of the sector labels adds the Gauss--Codazzi curvature of the moving lines, while in the label-conserving limit the mixed class is measured by a crossed Thouless pump. When only the factor dimensions $p$ and $q$ are specified, the comparison is made at the level of the clutching map of a rank-$pq$ bundle over $S^4$. Product frames generate winding numbers in $q\mathbb Z+p\mathbb Z$, so global factorization is possible exactly when $C_2$ is divisible by $\gcd(p,q)$; in particular, odd $C_2$ obstructs a $2\times2$ factorization. We illustrate the two settings with finite eight-level Hamiltonians and give pumping and occupied-projector tomography protocols for their readout.
\end{abstract}

\maketitle

\section{Introduction}

Subsystem labels such as spin, valley, layer, and orbital specify locally addressable degrees of freedom in a multiband system. Locally, a proposed $p\times q$ subsystem structure is an identification $E_x\simeq\mathbb C^p\otimes\mathbb C^q$, defined up to independent basis changes on the two factors. It extends globally only when the transition functions between local occupied-state frames can be chosen in the image of $U(p)\times U(q)\to U(pq)$. A smooth, gapped occupied multiplet need not satisfy this condition.

The obstruction is obtained by comparing occupied-bundle gluing with gluing induced by independent subsystem-frame changes. The relevant topological data depend on how much of the local decomposition is specified. A line resolution gives first-Chern data for the sectors; specifying only the occupied rank and the candidate factor dimensions leaves the clutching class of the full frame bundle.

On a closed surface, spin- and valley-Chern numbers, projected-observable constructions, and Chern-number matrices retain sector-resolved responses \cite{Sheng2006,Prodan2009,Ezawa2014,Wang2023Feature,Hung2026Nested,Niu1985,ZengShengZhu2019,ZengZhu2022}. We remove every contribution generated by either label separately. For binary labels, the surviving mixed class is the alternating sum of four sector Chern numbers. If the sector lines rotate inside a non-Abelian occupied space, its local curvature includes a Gauss--Codazzi term. In the symmetry-preserving limit, flux insertion for one label pumps the other label charge and measures the same mixed class \cite{Thouless1983}.

On $S^4$ the factorization question can be posed without a line resolution. A rank-$pq$ bundle is specified by an equatorial transition map in $SU(pq)$, while product frames generate the image of $SU(p)\times SU(q)$. The induced map on winding numbers has image $q\mathbb Z+p\mathbb Z$. We prove that the residue in $\mathbb Z_{\gcd(p,q)}$ is the complete obstruction to global factorization on $S^4$. For $p=q=2$, it is the parity of $C_2$. This connects the topology of occupied-state bundles with tensor-product reductions \cite{Ershov2003,Ershov2008}; related subsystem geometry and loop-dependent holonomy appear in Refs.~\cite{IkedaEntanglementGeometry,IkedaOzHolonomy}.

The paper is organized as follows. Section~\ref{sec:gluing-quotient} introduces the subsystem-gluing quotient used throughout. Section~\ref{sec:mixed-chern} develops its line-resolved form, including the mixed Chern invariant, the curvature of moving sector frames, and the crossed pump. Section~\ref{sec:full} treats the clutching problem on $S^4$, proves the factorization criterion, and presents the eight-level realization and transition-function tomography. Section~\ref{sec:discussion} discusses the physical interpretation and scope of the construction. Appendices~\ref{app:resolved-mixed}--\ref{app:eight-level} give the algebraic derivations, pumping calculations, proof details, and the three-transmon compilation and finite-shot analysis.

\section{Subsystem-gluing quotient}
\label{sec:gluing-quotient}

Let $E\to X$ be an isolated occupied bundle. On a patch $U_\alpha$, choose a local identification of its fibers with $\mathbb C^p\otimes\mathbb C^q$. Write $G_A$ and $G_B$ for the frame groups retained on the two candidate factors and $G_{AB}$ for the occupied-frame group. Independent changes of subsystem frame act through
\begin{equation}
\iota:G_A\times G_B\longrightarrow G_{AB},
\qquad
\iota(g_A,g_B)=g_A\otimes g_B.
\label{eq:subsystem-frame-map}
\end{equation}
In each of the two problems below, the relevant Chern or homotopy data form Abelian groups. We denote the problem-specific groups by $\mathcal G_A(X)$, $\mathcal G_B(X)$, and $\mathcal G_{AB}(X)$. The induced homomorphism $\iota_*:\mathcal G_A(X)\oplus\mathcal G_B(X)\to\mathcal G_{AB}(X)$ then defines
\begin{equation}
\mathfrak E_{A|B}(X)
:=\operatorname{coker}\iota_*
=\frac{\mathcal G_{AB}(X)}{\operatorname{Im}\iota_*}.
\label{eq:subsystem-gluing-cokernel}
\end{equation}
Equation~\eqref{eq:subsystem-gluing-cokernel} is an organizing form for the two concrete Abelian classification problems considered here: it removes gluing classes generated independently on the candidate factors.

For a prescribed array of rank-one sectors on a surface, the map acts on first-Chern data as
\begin{equation}
\Phi:\mathbb Z^{N_A}\oplus\mathbb Z^{N_B}
\longrightarrow\mathbb Z^{N_A\times N_B},
\qquad
\Phi(u,v)_{ab}=u_a+v_b.
\label{eq:surface-gluing-map}
\end{equation}
For a rank-$pq$ bundle over $S^4$, clutching identifies the relevant groups with third homotopy groups, and the tensor-product map becomes
\begin{equation}
\Psi_{p,q}:\mathbb Z\oplus\mathbb Z\longrightarrow\mathbb Z,
\qquad
\Psi_{p,q}(m,n)=qm+pn.
\label{eq:sphere-gluing-map}
\end{equation}
Consequently,
\begin{equation}
\operatorname{coker}\Phi\simeq\mathbb Z^{(N_A-1)(N_B-1)},
\qquad
\operatorname{coker}\Psi_{p,q}\simeq\mathbb Z_{\gcd(p,q)}.
\label{eq:two-gluing-cokernels}
\end{equation}
The two cokernels retain different levels of information about the candidate subsystem structure. With spectral lines specified, the quotient keeps the mixed combinations that cannot be generated by row and column data. With only the factor dimensions specified, it reduces the clutching integer by the subgroup generated by product frames. We begin with the surface case, where the quotient has a local curvature density and a transport interpretation.

\section{Mixed Chern topology from the surface quotient}
\label{sec:mixed-chern}

\subsection{Sector-line cokernel}
Let $X$ be a closed oriented surface and suppose the occupied bundle is resolved into rank-one sector bundles,
\begin{equation}
E=\bigoplus_{a=1}^{N_A}\bigoplus_{b=1}^{N_B}L_{ab}.
\label{eq:resolved-decomposition}
\end{equation}
This resolution makes the map $\Phi$ in Eq.~\eqref{eq:surface-gluing-map} concrete. A product frame has transition phases
\begin{equation}
g_{ab}=g_a^A g_b^B.
\label{eq:resolved-transition-factorization}
\end{equation}
Each sector Chern number $C_{ab}=c_1(L_{ab})[X]$ is an absolute invariant of the chosen decomposition. The mixed invariants are the integer combinations of $C_{ab}$ that vanish on every table of the form $u_a+v_b$. Equivalently, the sector Chern table defines a class
\begin{equation}
[C]_{AB}\in
\frac{\mathbb Z^{N_A\times N_B}}
{\{(u_a+v_b)_{ab}:u\in\mathbb Z^{N_A},\ v\in\mathbb Z^{N_B}\}}.
\label{eq:mixed-quotient}
\end{equation}
An $A$-only contribution shifts every entry in row $a$ by the same integer $u_a$, while a $B$-only contribution shifts every entry in column $b$ by $v_b$. The quotient extracts the part of the sector Chern table that is insensitive to additive $A$-only and $B$-only terms. For rank-one sectors it has rank $(N_A-1)(N_B-1)$,
and adjacent mixed differences provide integer coordinates,
\begin{equation}
\chi_{ab}=C_{ab}-C_{a,b+1}-C_{a+1,b}+C_{a+1,b+1}.
\label{eq:mixed-difference}
\end{equation}
Equivalently, $\chi_{ab}=c_1(\mathcal L_{ab})[X]$ for
$\mathcal L_{ab}=L_{ab}\otimes L_{a,b+1}^{-1}\otimes L_{a+1,b}^{-1}\otimes L_{a+1,b+1}$. A proof is given in Appendix~\ref{app:mixed-quotient}. The classes $\chi_{ab}$ describe the determinant-line topology of the prescribed decomposition.

Let $\tau_i$ and $\mu_i$ act on the two binary label spaces and let $\sigma_i$ act on the two-component orbital. Consider the Qi--Wu--Zhang block \cite{QiWuZhang2006}
\begin{equation}
h(\bm k;M)=\sin k_x\,\sigma_x+\sin k_y\,\sigma_y+
(M+\cos k_x+\cos k_y)\sigma_z
\label{eq:qwz}
\end{equation}
and the eight-level Hamiltonian
\begin{equation}
H_0(\bm k;m)=\bigoplus_{a,b=\pm1}h(\bm k;abm).
\label{eq:four-block}
\end{equation}
The occupied lines are resolved by binary labels $Z_A=\tau_z$ and $Z_B=\mu_z$. We normalize their compact charges to $a,b=\pm1$, so a $2\pi$ label flux is the fundamental large gauge transformation. Define
\begin{align}
C_{\rm tot}&=\sum_{ab}C_{ab}, &
C_A&=\sum_{ab}aC_{ab},\nonumber\\
C_B&=\sum_{ab}bC_{ab}, &
\chi&=\sum_{ab}abC_{ab}.
\label{eq:binary-responses}
\end{align}
For $0<m<2$,
\begin{equation}
C=\begin{pmatrix}1&-1\\-1&1\end{pmatrix},
\qquad C_{\rm tot}=C_A=C_B=0,
\qquad \chi=4.
\label{eq:zero-marginal-table}
\end{equation}
Thus all total and one-label Chern responses vanish while the mixed Chern number remains nonzero [Fig.~\ref{fig:zero-marginal}].

\begin{figure*}[t]
\centering
\includegraphics[width=0.99\textwidth]{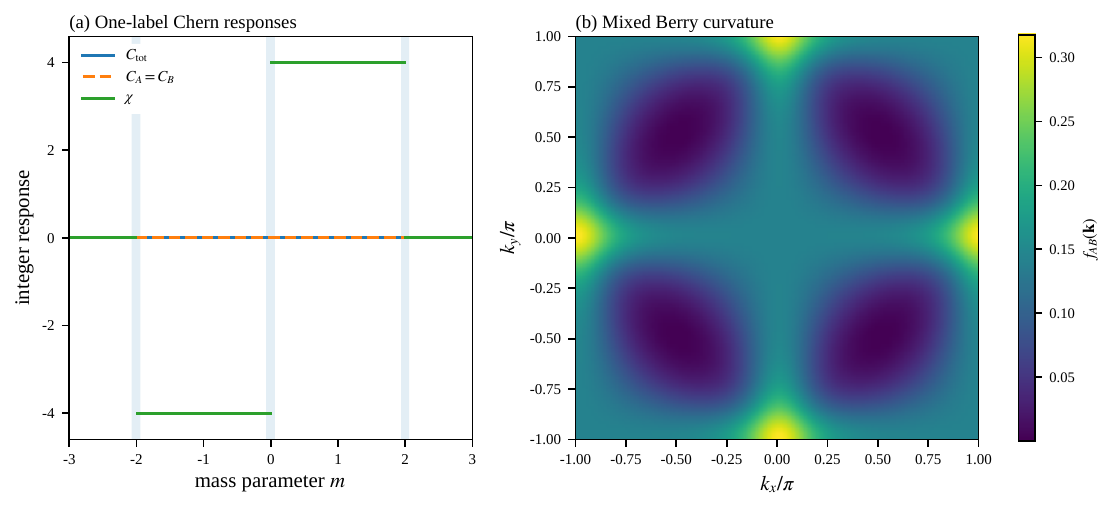}
\caption{\textbf{Mixed Chern topology with vanishing one-label responses.}
(a) Total and one-label responses vanish on the central plateaus, while $\chi=\pm4$. Shading marks bulk gap closings. (b) The mixed Berry curvature at $m=1$ integrates to four; the sector table is given in Eq.~\eqref{eq:zero-marginal-table}.}
\label{fig:zero-marginal}
\end{figure*}

\subsection{Moving sector frames}
\label{sec:mixing-pump}

To follow the surface quotient away from exact label conservation, we use the four-channel momentum-dependent family
\begin{align}
V(\bm k;\bm v)=&\;v_A\sin k_x\,\tau_x\sigma_z
+v_B\sin k_y\,\mu_x\sigma_x\nonumber\\
&+v_{AB}^{(\sigma)}(\cos k_x-\cos k_y)\tau_y\mu_y\sigma_y\nonumber\\
&+v_{AB}^{(0)}\sin(k_x+k_y)\tau_x\mu_x.
\label{eq:label-mixing}
\end{align}
Figures~\ref{fig:label-mixing} and~\ref{figS:label-resolution} follow the straight path
\begin{align}
H_\epsilon(\bm k)&=H_0(\bm k)+\epsilon V(\bm k;\bm v_\star),\nonumber\\
\bm v_\star&=\frac1{20}(11,10,7,4),
\qquad 0\le\epsilon\le\frac65.
\label{eq:mixing-path}
\end{align}
The first two channels rotate the individual labels, and the weaker joint channels mix them together. Their relative amplitudes remain fixed as $\epsilon$ controls the common scale. Because the harmonics depend on momentum, no constant basis change removes the deformation. The occupied energy gap remains open along the path. Let $P(\bm k)$ be the rank-four occupied spectral projector of $H_\epsilon(\bm k)$ and define
\begin{equation}
X_{\rm lab}(\bm k)=P(\bm k)(2Z_A+Z_B)P(\bm k).
\label{eq:projected-label-operator}
\end{equation}
We call the minimum adjacent eigenvalue spacing of $X_{\rm lab}|_{\operatorname{Ran}P}$ over the Brillouin zone the label-resolution gap. It remains open, so the four eigenlines can be tracked continuously from $\epsilon=0$. Figure~\ref{fig:label-mixing}(a) shows the two gaps, and Appendix~\ref{app:label-resolution} analyzes alternative compressed-label choices and their entanglement-spectrum diagnostics.

Let $Q_s(\bm k)$ be a resolved line projector, let $L_s=\operatorname{Ran}Q_s$, and let $\cF_P=P(\dd P)^{\wedge2}P$ be the occupied Wilczek--Zee curvature \cite{WilczekZee1984}. We include the factor $1/(2\pi\ii)$ so that the integral of $f_s$ is the sector Chern number:
\begin{align}
f_s&=\frac{1}{2\pi\ii}\left\{\Tr(Q_s\cF_P)\right.\left.+\Tr\!\left[Q_s(P\dd Q_sP)^{\wedge2}\right]\right\},\nonumber\\
\int_X f_s&=c_1(L_s)[X].
\label{eq:sector-curvature}
\end{align}
For weights $w_s=(1,-1,-1,1)$ in the order $(++,+-,-+,--)$,
\begin{align}
f_{AB}&=f_{\rm proj}+\Xi_{AB},\nonumber\\
f_{\rm proj}&=\frac{1}{2\pi\ii}\sum_s w_s\Tr(Q_s\cF_P),\nonumber\\
\Xi_{AB}&=\frac{1}{2\pi\ii}\sum_s w_s
\Tr\!\left[Q_s(P\dd Q_sP)^{\wedge2}\right].
\label{eq:curvature-decomposition}
\end{align}
The Gauss--Codazzi term $\Xi_{AB}$ records the motion of the resolved lines inside the occupied multiplet. The sum $f_{AB}$ is the determinant-line curvature; its two contributions depend on the moving sector frame. At the endpoint of Eq.~\eqref{eq:mixing-path}, the projected term supplies most of the integral and the Gauss--Codazzi term completes it to the Chern number four. Appendix~\ref{app:gauss-codazzi} gives the pointwise identity and mesh convergence. The endpoint has no conserved label charges, so the calculation is geometric. At $\epsilon=0$, the restored label symmetries give the mixed class a transport interpretation.

\begin{figure*}[t]
\centering
\includegraphics[width=0.99\textwidth]{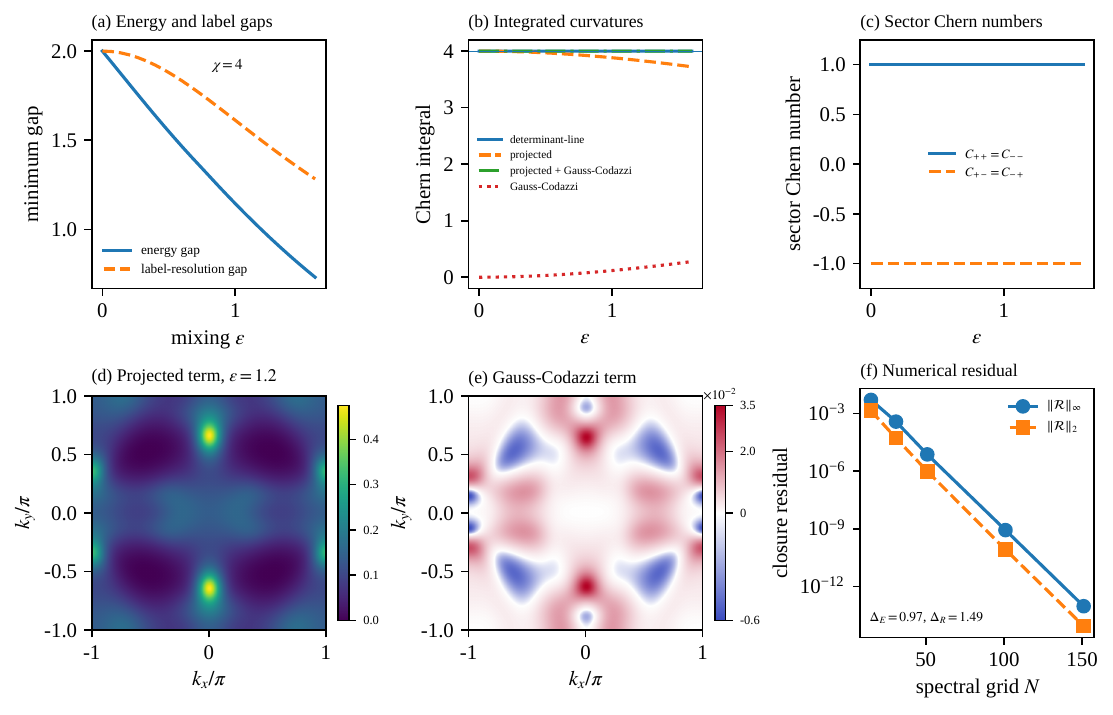}
\caption{\textbf{Mixed Chern topology under label mixing.}
(a) Energy and label-resolution gaps. (b) Determinant-line curvature and its projected and Gauss--Codazzi contributions. (c) Sector Chern numbers computed with the Fukui--Hatsugai--Suzuki algorithm \cite{FukuiHatsugaiSuzuki2005}. (d),(e) Projected and Gauss--Codazzi densities at $\epsilon=6/5$ on a momentum grid. Separate color scales display the two contributions; the Gauss--Codazzi scale is centered at zero. (f) Numerical residual of Eq.~\eqref{eq:curvature-decomposition}.}
\label{fig:label-mixing}
\end{figure*}

\subsection{Crossed pumping}

The cross response is evaluated in the $U(1)_A\times U(1)_B$-symmetric limit $H_0$. In the channel order $(++,+-,-+,--)$, the Abelian edge theory has chiral $K$ matrix and label-charge vectors
\begin{align}
K&=\operatorname{diag}(1,-1,-1,1),\nonumber\\
t_A&=(1,1,-1,-1)^T,\nonumber\\
t_B&=(1,-1,1,-1)^T.
\label{eq:charge-vectors}
\end{align}
They obey
\begin{equation}
t_A^TK^{-1}t_B=4=\chi,
\qquad
t_A^TK^{-1}t_A=t_B^TK^{-1}t_B=0.
\label{eq:mixed-anomaly}
\end{equation}
The null-vector criterion rules out a fully gapped, nondegenerate boundary that preserves both label charges \cite{Haldane1995}. Diagonalization of the eight-band ribbon shows that label-preserving edge potentials move the crossings, whereas a boundary $\tau_x$ term breaks $A$ conservation and opens a gap [Fig.~\ref{fig:edge-pump}(a,b)].

For a $2\pi$ flux coupled to $A$, sector $(a,b)$ pumps $aC_{ab}$ particles. The total, $A$-, and $B$-charge transfers are therefore
\begin{align}
\Delta Q_{\rm tot}&=\sum_{a,b}aC_{ab}=C_A,\nonumber\\
\Delta Q_A&=\sum_{a,b}a^2C_{ab}=C_{\rm tot},\nonumber\\
\Delta Q_B&=\sum_{a,b}abC_{ab}=\chi.
\label{eq:cross-pump-bulk}
\end{align}
For Eq.~\eqref{eq:zero-marginal-table}, this gives $\Delta Q_{\rm tot}=\Delta Q_A=0$ and $\Delta Q_B=4$.

We insert an $A$-dependent flux by solving
\begin{equation}
\ii\partial_t|\psi_j(t)\rangle=H[\theta_A(t)]|\psi_j(t)\rangle,
\qquad
\theta_A(t)=\theta_0+2\pi t/T,
\label{eq:time-pump}
\end{equation}
for every initially occupied state of a finite-cylinder Fermi sea. The evolved one-particle projector is
\begin{equation}
P_{\rm evo}(t)=\sum_{j\in\mathrm{occ}}|\psi_j(t)\rangle\langle\psi_j(t)|.
\label{eq:evolved-projector}
\end{equation}
For the cycle shown in Fig.~\ref{fig:edge-pump}, $\theta_0$ is chosen away from the exponentially small finite-size edge anticrossing. Let $\Pi_R$ be the one-particle projector onto the right half of the cylinder. The finite-cylinder calculation measures
\begin{align}
Q_{\rm tot}^R(t)&=\Tr[P_{\rm evo}(t)\Pi_R],\nonumber\\
Q_A^R(t)&=\Tr[P_{\rm evo}(t)\Pi_R Z_A],\nonumber\\
Q_B^R(t)&=\Tr[P_{\rm evo}(t)\Pi_R Z_B].
\label{eq:right-charge}
\end{align}
The cycle transfers four units of $B$ charge, with no net total or $A$-charge transfer. A finite bulk gap and an exponentially small opposite-edge anticrossing create a ramp window that follows the bulk adiabatically and crosses the edge anticrossing diabatically. Size, ramp-time, offset, and disorder checks are summarized in Appendix~\ref{app:pump}, together with the boundary-localized transfer profile.

\FloatBarrier
\begin{widefigure}
\begin{center}
\includegraphics[width=0.94\textwidth]{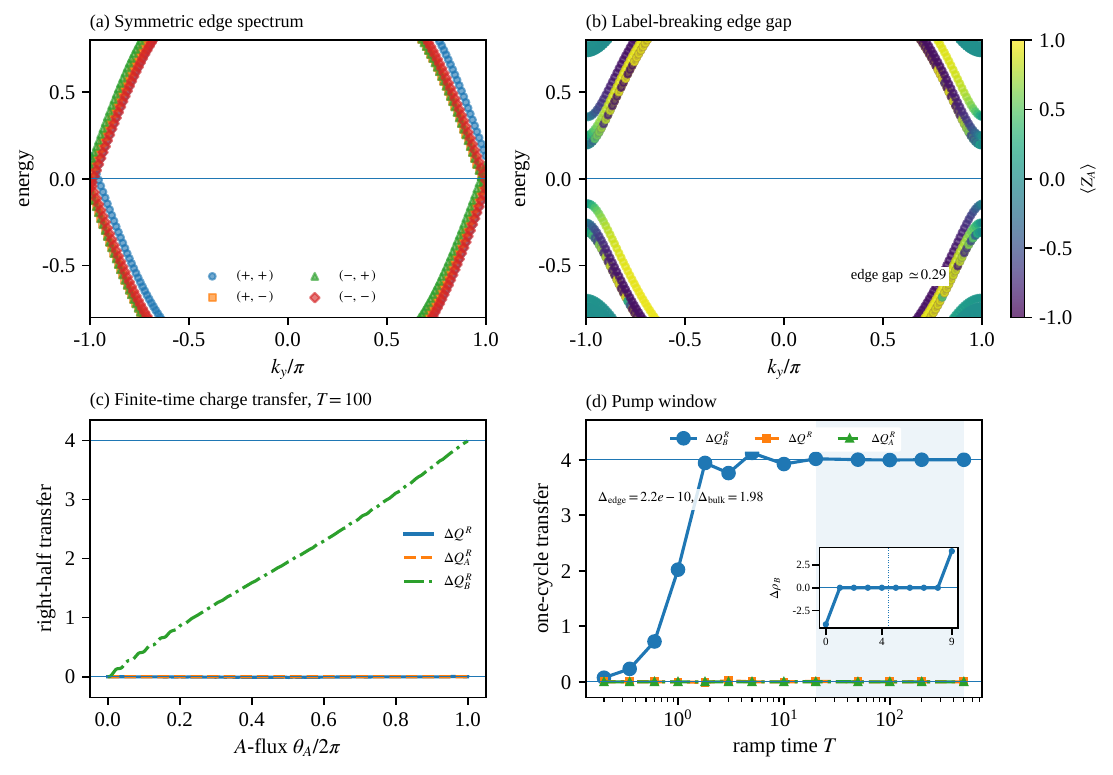}
\parbox{0.94\textwidth}{\small\refstepcounter{figure}\label{fig:edge-pump}\textbf{FIG.~\thefigure. Boundary spectrum and cross pump.}
(a) Edge states of the eight-band QWZ ribbon with both label charges conserved. Sector momentum offsets separate coincident crossings. (b) A boundary $\tau_x$ coupling opens a gap. (c) Finite-time evolution transfers only $B$ charge. (d) Finite-size and ramp-time dependence around the quantized transfer; the inset shows the boundary-localized density change.}
\end{center}
\end{widefigure}
\section{The clutching quotient on \texorpdfstring{$S^4$}{S4}}
\label{sec:full}

The surface calculation used a chosen eigenline resolution, so its gluing class appeared as a Chern table. Suppose now that no such resolution is supplied. The occupied rank and the candidate factor dimensions remain, and Eq.~\eqref{eq:subsystem-gluing-cokernel} acts on the clutching class of the full frame bundle. The resulting quotient is a global factorization test.

\subsection{Factorization from the clutching quotient}

Let $E\to S^4$ have rank $pq$. Since $H^2(S^4;\mathbb Z)=0$, its determinant line is topologically trivial. After choosing a determinant trivialization, independent subsystem operations generate the subgroup
\begin{align}
G_{p,q}&=\operatorname{Im}[SU(p)\times SU(q)\to SU(pq)],\nonumber\\
G_{p,q}&\simeq\frac{SU(p)\times SU(q)}{\mu_d},
\qquad d=\gcd(p,q).
\label{eq:local-group-su}
\end{align}
Here $\mu_d=\{(\zeta I_p,\zeta^{-1}I_q):\zeta^d=1\}$ is the common central kernel. A global $p\times q$ structure is a reduction of the $SU(pq)$ frame bundle to $G_{p,q}$. Write
\begin{equation}
C_2(E):=\langle c_2(E),[S^4]\rangle.
\label{eq:c2-definition}
\end{equation}
Writing $W_3$ for the integer winding of a map $S^3\to SU(n)$, clutching on the equatorial $S^3$ realizes Eq.~\eqref{eq:sphere-gluing-map} through
\begin{equation}
W_3[g_A\otimes g_B]=qW_3[g_A]+pW_3[g_B].
\label{eq:tensor-winding-main}
\end{equation}
Hence the quotient of occupied windings by subsystem-generated windings is
\begin{equation}
\frac{\mathbb Z}{q\mathbb Z+p\mathbb Z}
\simeq\mathbb Z_{\gcd(p,q)}.
\label{eq:s4-winding-cokernel}
\end{equation}
The next theorem shows that this residue is the complete obstruction to a global factorization on $S^4$, including the lift from the product subgroup to the two subsystem bundles.

\FloatBarrier
\begin{widefigure}
\begin{center}
\includegraphics[width=0.94\textwidth]{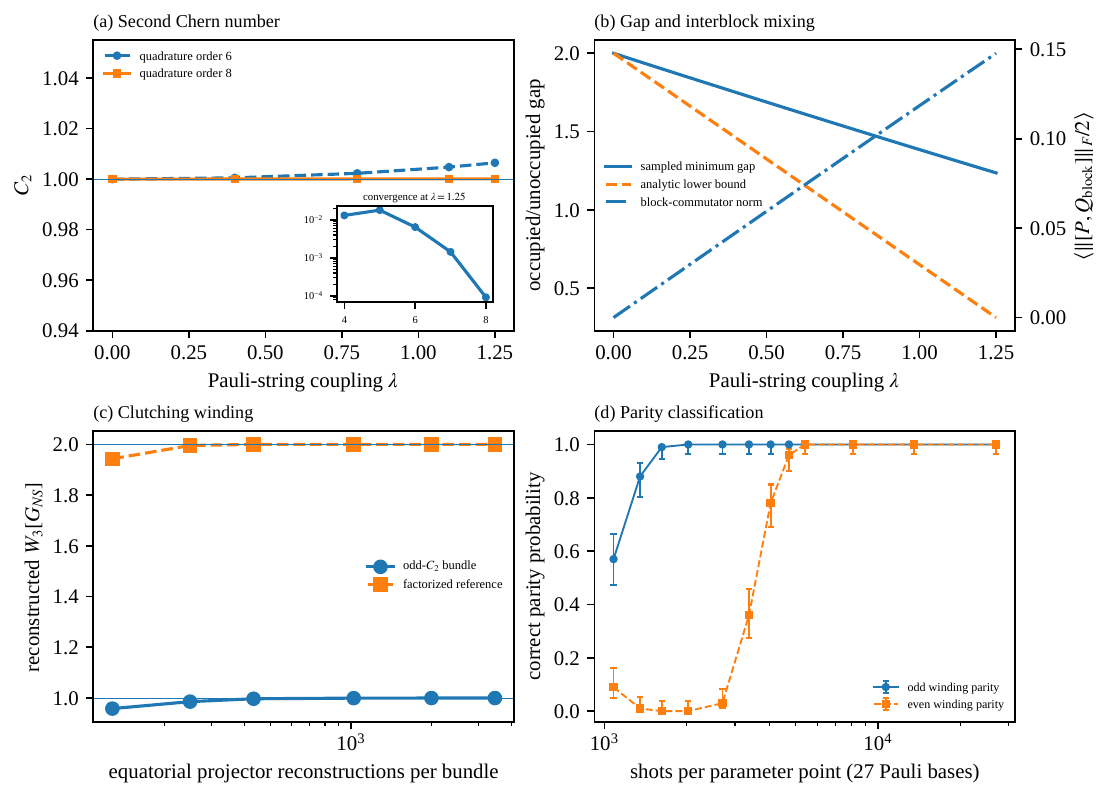}
\parbox{0.94\textwidth}{\small\refstepcounter{figure}\label{fig:full-reduction}\textbf{FIG.~\thefigure. Clutching quotient and subsystem-factorization test.}
(a) Numerical integration of the interblock-coupled rank-four projector converges to $C_2=1$. (b) The solid curve gives the minimum gap on the sampled sphere, and the dashed curve gives the analytic lower bound. The right axis shows the block-commutator norm $\tfrac12\langle\|[P_\lambda,Q_{\rm block}]\|_F\rangle$, where $Q_{\rm block}=\operatorname{diag}(I_4,0_4)$ projects onto the original Yang block. (c) The north--south clutching winding converges to one for the odd-$C_2$ bundle and two for the factorized reference. (d) Correct winding parity versus the finite-shot budget for the three-transmon encoding; error bars are $95\%$ Wilson intervals. The simulation protocol, gate compilation, and acquisition counts are given in Appendix~\ref{app:tomography}.}
\end{center}
\end{widefigure}

\begin{theorem}[Factorization criterion on $S^4$]\label{thm:factorization}
Let $p,q\ge2$, and let $E\to S^4$ be a rank-$pq$ complex vector bundle. For an ordered candidate factorization of dimensions $p\times q$, the following statements are equivalent:
\begin{enumerate}
\setlength{\itemsep}{2pt}
\setlength{\parskip}{0pt}
\setlength{\parsep}{0pt}
\item[(i)] After choosing a trivialization of $\det E$, the associated $SU(pq)$ frame bundle reduces to $G_{p,q}$.
\item[(ii)] There are rank-$p$ and rank-$q$ complex vector bundles $E_A,E_B$ with trivial determinant such that
\begin{equation*}
E\simeq E_A\otimes E_B.
\end{equation*}
\item[(iii)]
\begin{equation*}
C_2(E)\in \gcd(p,q)\mathbb Z.
\end{equation*}
\end{enumerate}
Hence, with $d=\gcd(p,q)$,
\begin{equation}
\eta_{p,q}(E):=C_2(E)\bmod d\in\mathbb Z_d
\label{eq:full-invariant}
\end{equation}
is the obstruction to factorization on $S^4$.
\end{theorem}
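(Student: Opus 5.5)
The plan is to prove the cycle (ii)$\Rightarrow$(i)$\Rightarrow$(iii)$\Rightarrow$(ii), working throughout with clutching on the equatorial $S^3$. Principal $K$-bundles over $S^4$ with $K$ connected are classified by $\pi_3(K)$. Applied to $SU(pq)$, a rank-$pq$ bundle $E$ with trivialized determinant is classified by the winding $W_3[g]\in\pi_3(SU(pq))\simeq\mathbb Z$ of its clutching map $g:S^3\to SU(pq)$. Under the paper's normalization of $W_3$ we have $W_3[g]=\pm C_2(E)$; the sign is immaterial for divisibility. I will fix this identification once at the start.

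\emph{(ii)$\Rightarrow$(i).} If $E\simeq E_A\otimes E_B$ with $\det E_A$ and $\det E_B$ trivial, choose clutching maps $g_A:S^3\to SU(p)$ and $g_B:S^3\to SU(q)$. Then $E$ is clutched by $g_A\otimes g_B$, which takes values in $G_{p,q}$, so the frame bundle reduces. The determinant trivialization induced by the product is homotopic to any other choice, because $S^4$ is simply connected and the choices differ by maps $S^4\to U(1)$, all of which are null-homotopic.

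\emph{(i)$\Rightarrow$(iii).} A $G_{p,q}$-reduction is a principal $G_{p,q}$-bundle, so it is clutched by a map $h:S^3\to G_{p,q}$. Since $S^3$ is simply connected, $h$ lifts through the finite covering $SU(p)\times SU(q)\to G_{p,q}$ with kernel $\mu_d$ to a pair $(g_A,g_B)$. Equation~\eqref{eq:tensor-winding-main} then gives $W_3[h]=qW_3[g_A]+pW_3[g_B]\in q\mathbb Z+p\mathbb Z=d\mathbb Z$. Note that this lifting argument also yields the bundles $E_A$ and $E_B$, which shows that the covering kernel $\mu_d$ causes no further obstruction.

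\emph{(iii)$\Rightarrow$(ii).} Suppose $C_2(E)=dk$. By Bézout, choose integers $m,n$ with $qm+pn=d$, and take $g_A$ and $g_B$ of windings $km$ and $kn$; this is possible because $\pi_3(SU(r))\simeq\mathbb Z$ for $r\ge2$. Let $E_A$ and $E_B$ be the bundles they clutch. Then $E_A\otimes E_B$ is clutched by $g_A\otimes g_B$, whose winding is $dk$, so $E_A\otimes E_B\simeq E$ by the classification. The statement about $\eta_{p,q}$ follows immediately from (iii).

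The main obstacle is justifying Eq.~\eqref{eq:tensor-winding-main} if it is not taken as given. To do so, I would use that $\pi_3$ is additive under pointwise products, so $g_A\otimes g_B=(g_A\otimes I_q)(I_p\otimes g_B)$ contributes $W_3[g_A\otimes I_q]+W_3[I_p\otimes g_B]$. The block-diagonal map $g_A\otimes I_q$ is the direct sum of $q$ copies of $g_A$, and the stabilization $SU(p)\to SU(pq)$ is an isomorphism on $\pi_3$ with $W_3$ additive over block sums. This gives $qW_3[g_A]$, and $pW_3[g_B]$ follows similarly after conjugating by the swap permutation, which is homotopically irrelevant. Equivalently, this is the Chern-character identity $c_2(E_A\otimes E_B)=q\,c_2(E_A)+p\,c_2(E_B)$ when $c_1=0$.
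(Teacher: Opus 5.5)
Your proposal is correct and follows essentially the paper's route: clutching on the equatorial $S^3$, the tensor-winding formula $W_3[g_A\otimes g_B]=qW_3[g_A]+pW_3[g_B]$, a B\'ezout construction for (iii)$\Rightarrow$(ii), and a lift through the finite central quotient by $\mu_d$ to pass from a $G_{p,q}$ reduction to actual factors $E_A,E_B$. The differences are cosmetic. You lift the clutching map directly using $\pi_1(S^3)=0$, where the paper cites the vanishing of $H^2(S^4;\mathbb Z_d)$. You justify the tensor-winding formula through additivity of $\pi_3$ under pointwise products and block sums, or the Chern character, rather than the paper's Maurer--Cartan cubic-trace computation.
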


The residue is nontrivial when $p$ and $q$ share a common divisor; the smallest case is $p=q=2$. Appendix~\ref{app:factorization-proof} proves the converse construction and the lift through the finite central quotient. For two qubits,
\begin{equation}
C_2(E_A\otimes E_B)=2C_2(E_A)+2C_2(E_B)\in2\mathbb Z,
\label{eq:two-qubit-evenness}
\end{equation}
so odd $C_2$ forbids a global $2\times2$ factorization. The candidate subsystem dimensions enter the reduction map itself; changing them changes the image in Eq.~\eqref{eq:sphere-gluing-map} and therefore the quotient.

\subsection{Eight-level realization}

For the rank-four realization, the ambient eight-level control space is
\begin{equation}
\mathcal H_{\rm amb}=\mathbb C_\kappa^2\otimes\mathbb C_\tau^2\otimes\mathbb C_\sigma^2.
\label{eq:ambient-hardware}
\end{equation}
For a superconducting realization, identify $|\kappa\tau\sigma\rangle$ with the computational basis of three transmons $(q_\kappa,q_\tau,q_\sigma)$ arranged as the chain $q_\kappa$--$q_\tau$--$q_\sigma$. This physical tensor product supplies the control and readout basis; the theorem tests a possible two-factor structure inside the occupied rank-four bundle. Let the Yang-monopole negative-energy projector \cite{Yang1978,Sugawa2018} be
\begin{equation}
P_Y(\bm n)=\frac{I_4-\sum_{\mu=1}^{5}n_\mu\Gamma_\mu}{2},
\qquad \bm n\in S^4,
\label{eq:yang-projector}
\end{equation}
The $4\times4$ matrices $\Gamma_\mu$ obey $\{\Gamma_\mu,\Gamma_\nu\}=2\delta_{\mu\nu}$; an explicit representation is given in Appendix~\ref{app:eight-level}. Let $E_Y=\operatorname{Ran}P_Y$ be the rank-two occupied bundle. We orient $S^4$ so that $C_2(E_Y)=1$, which fixes the equatorial clutching winding to $W_3=1$. The odd projector and factorized reference are
\begin{align}
P_{\rm odd}&=P_Y\oplus I_2\oplus0_2,
& C_2&=1,& \eta_{2,2}&=1,\nonumber\\
P_{\rm ref}&=P_Y\otimes I_2,
& C_2&=2,& \eta_{2,2}&=0.
\label{eq:odd-even-projectors}
\end{align}
The factorized reference bundle is explicitly $E_Y\otimes\underline{\mathbb C}^{2}$, where $\underline{\mathbb C}^{2}$ is the trivial rank-two bundle. For the odd projector we define the spectrally flattened Hamiltonian
\begin{equation}
H_{\rm odd}^{\rm flat}=I_8-2P_{\rm odd}.
\label{eq:odd-flat-main}
\end{equation}
To break exact block conservation we study $H_\lambda=H_{\rm odd}^{\rm flat}+\lambda V_{\rm cpl}$ with
\begin{equation}
V_{\rm cpl}(\bm n)=\frac12\left(\frac{7}{20}M_0+\frac1{\sqrt5}\sum_{\mu=1}^{5}n_\mu M_\mu\right),
\label{eq:structured-s4-mixing}
\end{equation}
where
\begin{align}
(M_0,\ldots,M_5)=(&\kappa_x,\kappa_y\tau_x,\kappa_y\tau_z\sigma_x,\nonumber\\
&\kappa_x\tau_y\sigma_z,\kappa_z\tau_x\sigma_y,\tau_y\sigma_x).
\label{eq:structured-pauli-strings}
\end{align}
The Pauli weights of $(M_0,\ldots,M_5)$ are $(1,2,3,3,3,2)$. A weight-$w$ Pauli rotation is implemented exactly by local basis changes, parity accumulation on one qubit, and a single $R_z$ rotation, using $2(w-1)$ CNOT-equivalent entanglers. The six controls therefore use $(0,2,4,4,4,2)$ entanglers, respectively. Choosing $q_\tau$ as the parity target implements every weight-three term on the two edges of the linear chain, without SWAP gates. Figure~\ref{fig:transmon-circuits}(a) displays the resulting nearest-neighbor circuit for $M_2=YZX$; changing only the local basis rotations gives the other Pauli words. Panel (b) shows the state-preparation and local-basis readout circuit used to reconstruct the occupied projector. Appendix~\ref{app:eight-level} gives the general circuit identity and shows that the remaining terms in $H_{\rm odd}^{\rm flat}$ have the same maximal Pauli weight.

\FloatBarrier
\begin{widefigure}
\begin{center}
\begin{minipage}[t]{0.5\textwidth}
\centering
\textbf{(a) Nearest-neighbor implementation of $e^{-\ii\theta YZX/2}$}\par\vspace{4pt}
{\small
\begin{quantikz}[row sep=0.27cm,column sep=0.20cm]
\lstick{$q_\kappa$} & \gate{S^\dagger} & \gate{H} & \ctrl{1} & \qw & \qw & \qw & \ctrl{1} & \gate{H} & \gate{S} & \qw \\
\lstick{$q_\tau$}   & \qw & \qw & \targ{} & \targ{} & \gate{R_z(\theta)} & \targ{} & \targ{} & \qw & \qw & \qw \\
\lstick{$q_\sigma$} & \qw & \gate{H} & \qw & \ctrl{-1} & \qw & \ctrl{-1} & \qw & \gate{H} & \qw & \qw
\end{quantikz}}
\end{minipage}\hfill
\begin{minipage}[t]{0.5\textwidth}
\centering
\textbf{(b) One projector-tomography circuit}\par\vspace{4pt}
{\small
\begin{quantikz}[row sep=0.27cm,column sep=0.30cm]
\lstick{$\ket{0}_{q_\kappa}$} & \gate[wires=3]{U_j(\bm n)} & \gate{R_\alpha} & \meter{} \\
\lstick{$\ket{0}_{q_\tau}$}   & & \gate{R_\beta} & \meter{} \\
\lstick{$\ket{0}_{q_\sigma}$} & & \gate{R_\gamma} & \meter{}
\end{quantikz}}
\par\vspace{3pt}
{\small $j=1,\ldots,4$, \quad $(\alpha,\beta,\gamma)\in\{X,Y,Z\}^3$}
\end{minipage}
\parbox{0.94\textwidth}{\small\refstepcounter{figure}\label{fig:transmon-circuits}\textbf{FIG.~\thefigure. Quantum circuits.}
(a) Exact compilation of the weight-three control $M_2=YZX$. The first two and final two local gates on $q_\kappa$ implement $B_Y=HS^\dagger$ and $B_Y^\dagger=SH$; the Hadamards on $q_\sigma$ implement the $X$-basis change. The two CNOT pairs use only the device edges $q_\kappa$--$q_\tau$ and $q_\tau$--$q_\sigma$. (b) One circuit in the projector-tomography ensemble. $U_j(\bm n)$ prepares occupied eigenvector $j$, and $R_X=H$, $R_Y=HS^\dagger$, $R_Z=I$ rotate the selected Pauli basis to computational-basis readout. Counts from the four occupied preparations are pooled for each of the $27$ basis triples.}
\end{center}
\end{widefigure}

Since every $M_\mu$ has unit norm,
\begin{equation}
\sup_{\bm n}\|V_{\rm cpl}(\bm n)\|\le\frac{27}{40}\simeq0.68,
\qquad
\Delta_\lambda(\bm n)\ge2-\frac{27}{20}|\lambda|.
\label{eq:weyl-bound}
\end{equation}
At $\lambda=5/4$, the analytic bound remains positive, so the path is gapped and homotopy invariance gives $C_2=1$; direct quadrature verifies this value [Fig.~\ref{fig:full-reduction}(a)]. Let $P_\lambda$ denote the rank-four occupied spectral projector of $H_\lambda$. Spectral flattening, $\widetilde H_\lambda=I_8-2P_\lambda$, restores an exactly degenerate occupied quartet whenever a Wilczek--Zee interpretation is desired, with the bundle unchanged.

\subsection{Transition-function tomography}

To reconstruct the clutching map, let $P$ be the rank-four occupied projector and represent fixed north and south reference subspaces by four-column isometries $R_N$ and $R_S$. Define
\begin{equation}
U_\nu=P R_\nu(R_\nu^\dagger P R_\nu)^{-1/2},
\qquad
G_{NS}=U_N^\dagger U_S\in U(4).
\label{eq:tomographic-frame}
\end{equation}
The determinant component has trivial $\pi_3$, so the trace winding of $G_{NS}$ equals the winding of its special-unitary part. Frame changes that extend across the hemispheres leave this integer unchanged; Appendix~\ref{app:tomography} gives the proof and discretization. On the three-transmon register, each occupied eigenvector is compiled offline as a three-qubit preparation circuit, and equal numbers of shots are pooled to reconstruct $P/4$. Direct preparation avoids a product-formula sequence of Hamiltonian rotations; digital Pauli-string evolution remains available when the dynamics itself is of interest. For the $N=4$ grid and $200$ pooled shots per Pauli basis, the protocol uses approximately $7\times10^5$ shots per bundle. The reconstructed windings approach $1$ and $2$, and the finite-shot simulations resolve their parities.

\section{Discussion}
\label{sec:discussion}

The surface and $S^4$ invariants arise at different resolutions of the candidate subsystem structure. With spectral lines specified, the quotient removes row- and column-additive Chern data and leaves mixed curvature and transport. With only the factor dimensions specified, it reduces the clutching winding modulo the windings generated by product frames and tests global factorization.

For moving sector lines, the local density combines projected Wilczek--Zee curvature with the Gauss--Codazzi curvature of the sector frame. When the labels are conserved, flux insertion measures the mixed integer through crossed transport. On $S^4$, north--south frame overlaps reconstruct the transition function and its winding parity. Yang-monopole curvature and integrated second-Chern responses have been measured in atomic and photonic systems \cite{Sugawa2018,Lohse2018,Zilberberg2018}. The protocol developed here reconstructs the transition function itself. The three-transmon simulation tests the reconstruction from occupied-projector tomography to the parity of the clutching winding. The hardware qubits encode the ambient eight-dimensional control space; the factorization test concerns the occupied four-plane.

The construction extends to other base spaces and higher-rank spectral resolutions once the image of the subsystem-frame changes is determined. Curvature, transport, and transition-function measurements then provide complementary readouts of the resulting obstruction.

\appendix

\section{Resolved mixed topology}\label{app:resolved-mixed}
\label{secS:resolved-mixed}

\subsection{Mixed Chern quotient}\label{app:mixed-quotient}
\label{secS:mixed-quotient}

Each sector integer $C_{ab}=c_1(L_{ab})[X]$ is fixed by the specified decomposition. The quotient identifies Chern tables that differ by additive one-label contributions. Let $\Lambda=\mathbb Z^{N_A\times N_B}$ and define
\begin{align}
\Phi &: \mathbb Z^{N_A}\oplus\mathbb Z^{N_B}\longrightarrow\Lambda,\nonumber\\
\Phi(u,v)_{ab}&=u_a+v_b.
\label{eqS:Phi}
\end{align}
The sublattice $\Lambda_{\rm one}=\operatorname{im}\Phi$ contains the additive Chern tables attributable to either label separately. The mixed Chern datum is the quotient class $[C]_{AB}$ in $\mathcal Q_{AB}:=\operatorname{coker}\Phi=\Lambda/\Lambda_{\rm one}$. For example, tensoring $L_{ab}$ with $A_a\otimes B_b$ shifts the Chern table by $u_a+v_b$. Define the mixed-difference map
\begin{align}
D:\Lambda&\longrightarrow\mathbb Z^{(N_A-1)(N_B-1)},\nonumber\\
(DC)_{ab}&=C_{ab}-C_{a,b+1}-C_{a+1,b}+C_{a+1,b+1}.
\label{eqS:D-map}
\end{align}
Every additive one-label array lies in $\ker D$. Conversely, if $DC=0$, set
\begin{equation}
u_a=C_{a1},\qquad v_b=C_{1b}-C_{11}.
\end{equation}
Vanishing mixed differences imply recursively
\begin{equation}
C_{ab}=C_{a1}+C_{1b}-C_{11}=u_a+v_b,
\end{equation}
so $\ker D=\Lambda_{\rm one}$.

The map $D$ is surjective over the integers. Given $x_{ab}\in\mathbb Z$, impose $C_{a1}=C_{1b}=0$ and define
\begin{equation}
C_{ab}=\sum_{i<a}\sum_{j<b}x_{ij},\qquad a,b\ge2.
\label{eqS:discrete-integral}
\end{equation}
Then $DC=x$. Hence
\begin{equation}
\frac{\Lambda}{\Lambda_{\rm one}}\simeq\mathbb Z^{(N_A-1)(N_B-1)}.
\label{eqS:quotient-proof}
\end{equation}
The rank-one quotient is torsion-free. Other rectangular-difference bases are related by unimodular transformations.

\begin{proposition}[Universal property of the quotient]
Let $\mathcal G$ be an Abelian group and let $f:\Lambda\to\mathcal G$ be an additive resolved invariant. If $f\circ\Phi=0$, equivalently if $f$ vanishes on every $A$-only and $B$-only table, then there exists a unique homomorphism $\bar f:\mathcal Q_{AB}\to\mathcal G$ such that
\begin{equation}
f=\bar f\circ\pi.
\label{eqS:universal-factorization}
\end{equation}
\end{proposition}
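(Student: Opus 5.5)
This is the universal property of a cokernel, so the plan is to use only the fact that $\pi:\Lambda\to\mathcal Q_{AB}=\Lambda/\Lambda_{\rm one}$ is the canonical surjective homomorphism with $\ker\pi=\Lambda_{\rm one}=\operatorname{im}\Phi$. First I will record that the two hypotheses are the same. The condition $f\circ\Phi=0$ means $f(u_a+v_b)=0$ for all $u,v$. Since $\Phi(u,0)$ is an $A$-only table and $\Phi(0,v)$ is a $B$-only table, and $f$ is additive, this is equivalent to $f$ vanishing on every $A$-only and every $B$-only table. Either way, $f$ vanishes on $\Lambda_{\rm one}$.

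For existence, I will define $\bar f([C]_{AB}):=f(C)$. This is well defined: if $[C]_{AB}=[C']_{AB}$, then $C-C'\in\Lambda_{\rm one}$, so $f(C)-f(C')=f(C-C')=0$. It is a homomorphism because $\pi$ is additive and $f$ is additive. The identity $f=\bar f\circ\pi$ then holds by construction.

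For uniqueness, suppose $\bar f_1$ and $\bar f_2$ both satisfy Eq.~\eqref{eqS:universal-factorization}. Then $\bar f_1\circ\pi=\bar f_2\circ\pi$, and because $\pi$ is surjective, $\bar f_1=\bar f_2$.

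None of these steps is a real obstacle; the only point needing care is the well-definedness, which is exactly where the hypothesis $f\circ\Phi=0$ enters. I will also note, using Eq.~\eqref{eqS:quotient-proof} and $\ker D=\Lambda_{\rm one}$, that $D$ induces an isomorphism $\mathcal Q_{AB}\simeq\mathbb Z^{(N_A-1)(N_B-1)}$. So $\bar f$ is determined by its values on the classes that $D$ sends to the standard basis vectors. Concretely, every such invariant is an integer-linear (or $\mathcal G$-valued linear) combination of the mixed differences $\chi_{ab}$. This is the statement that the $\chi_{ab}$ exhaust the mixed invariants.
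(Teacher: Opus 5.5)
Your proposal is correct and follows the paper's own proof: you define $\bar f(\pi(C)):=f(C)$, get well-definedness from $f$ vanishing on $\Lambda_{\rm one}=\operatorname{im}\Phi$, and get uniqueness from surjectivity of $\pi$. Your closing remark, that every such $f$ is a $\mathcal G$-valued combination of the $\chi_{ab}$, is also valid, since $D$ induces an isomorphism $\mathcal Q_{AB}\simeq\mathbb Z^{(N_A-1)(N_B-1)}$.
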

\begin{proof}
Let $\pi:\Lambda\to\Lambda/\Lambda_{\rm one}=\mathcal Q_{AB}$ be the quotient map. Because $f$ vanishes on $\Lambda_{\rm one}=\operatorname{im}\Phi$, the value $\bar f(\pi(C))=f(C)$ is independent of the representative. Surjectivity of $\pi$ makes the factorization unique.
\end{proof}
For $N_A=N_B=2$, the unique primitive coordinate, up to sign, is $C_{++}-C_{+-}-C_{-+}+C_{--}$.

The quotient has a determinant-line realization.

For the rank-one decomposition $E=\bigoplus_{ab}L_{ab}$, define
\begin{equation}
\mathcal L_{ab}=L_{ab}\otimes L_{a,b+1}^{-1}\otimes L_{a+1,b}^{-1}\otimes L_{a+1,b+1}.
\label{eqS:mixed-line}
\end{equation}
Tensoring every sector with $A_a\otimes B_b$ adds an element of $\Lambda_{\rm one}$ and leaves $c_1(\mathcal L_{ab})$ invariant. Equation~\eqref{eqS:quotient-proof} therefore realizes the mixed quotient at the determinant-line level.

\subsection{Label resolution and entanglement spectrum}\label{app:label-resolution}
\label{secS:compressed}

For a binary microscopic label with $Q_+=(1+Z)/2$,
\begin{equation}
X=PZP=2PQ_+P-P.
\label{eqS:compressed-binary}
\end{equation}
If $\xi_j$ are the nonzero eigenvalues of $Q_+PQ_+$ and $\mu_j$ the corresponding eigenvalues of $PZP$ in $\operatorname{Ran}P$, then
\begin{equation}
\mu_j=2\xi_j-1,
\qquad
\varepsilon_j^{\rm ent}=\log\frac{1-\xi_j}{\xi_j}=-2\operatorname{artanh}\mu_j.
\label{eqS:entanglement-relation}
\end{equation}
At each binary stage, a compressed-label gap closes exactly when the corresponding internal-partition spectrum reaches $\xi=1/2$. This gives the binary form of the projected-feature diagnostic \cite{Wang2023Feature,Hung2026Nested}.

A label-resolved decomposition can be constructed hierarchically. First split $P$ with $X_A=PZ_AP$ into rank-two projectors $P_a$, then diagonalize
\begin{equation}
X_{B|a}=P_aZ_BP_a.
\label{eqS:conditional-label-operator}
\end{equation}
The gaps $\delta_A$ and $\delta_{B|a}$ identify which stage of the decomposition fails.

For the mixed Hamiltonian at $\epsilon=6/5$, we also scan
\begin{equation}
X_{\rm lab}(\alpha,\beta)=P(\alpha Z_A+\beta Z_B)P.
\label{eqS:label-operator-family}
\end{equation}
The heat map in Fig.~\ref{figS:label-resolution}(a), evaluated on a $301\times301$ $(\alpha,\beta)$ grid with a $25\times25$ Brillouin-zone grid, uses the normalized direction $(\alpha,\beta)/\sqrt{\alpha^2+\beta^2}$ and records the minimum eigenvalue separation over the Brillouin zone. The region $\alpha>\beta>0$ contains the point $(\alpha,\beta)=(2,1)$. Along $\alpha=1$ and $0.15\le\beta\le0.85$,
\begin{equation}
0.15\lesssim\delta_{\rm lab}\lesssim0.69,
\label{eqS:label-resolution-path}
\end{equation}
while the ordered FHS table is
\begin{equation}
(C_{++},C_{+-},C_{-+},C_{--})=(1,-1,-1,1),
\qquad \chi=4.
\label{eqS:label-resolution-table}
\end{equation}
At $\epsilon=6/5$, the hierarchical compressed gaps are
\begin{equation}
\delta_A\simeq1.4,
\qquad \delta_{B|A=\pm}\simeq1.6.
\label{eqS:hierarchical-gaps}
\end{equation}
The corresponding minimum distances of the binary entanglement eigenvalues from $1/2$ are approximately $0.35$ and $0.39$. At each binary stage the exact relation is $\xi=(1+\mu)/2$. The scalar four-line resolution and the hierarchical binary resolution answer different questions: a collision between two scalar spectral islands can involve more than one binary step.

\begin{figure*}[t]
\centering
\includegraphics[width=0.99\textwidth]{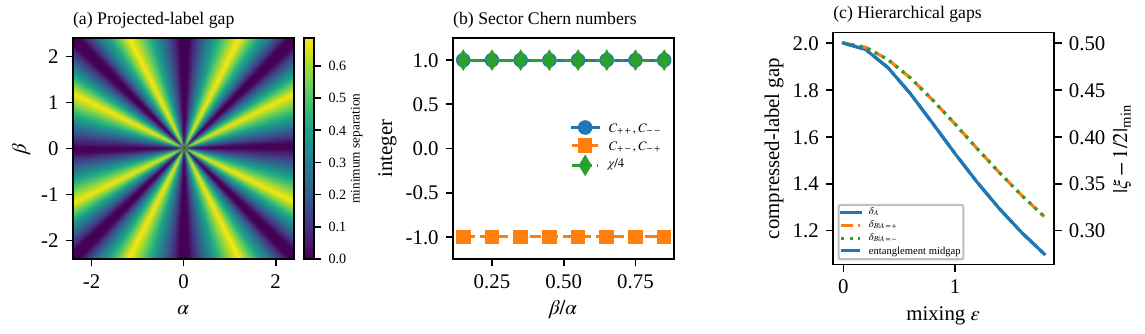}
\caption{\textbf{Projected-label gaps and hierarchical diagnostics.}
(a) Minimum normalized label-resolution gap of $P(\alpha Z_A+\beta Z_B)P$ at $\epsilon=6/5$, computed on a $301\times301$ parameter grid and a $25\times25$ Brillouin-zone grid. (b) Along $\alpha=1$ and $0.15\le\beta\le0.85$, the four sector Chern integers and $\chi$ remain constant. (c) The sequential $A$ and conditional-$B$ compressed gaps are open along the mixing path; the secondary axis shows the associated entanglement-spectrum distance from $\xi=1/2$.}
\label{figS:label-resolution}
\end{figure*}

\subsection{Gauss--Codazzi formula and numerical convergence}\label{app:gauss-codazzi}
\label{secS:sector-curvature}

In Eq.~\eqref{eq:mixing-path}, the entries of $\bm v_\star$ correspond to the coefficients in Eq.~\eqref{eq:label-mixing} in the order
\begin{equation*}
(v_A,v_B,v_{AB}^{(\sigma)},v_{AB}^{(0)}).
\end{equation*}
They give comparable weight to the two single-label rotations and smaller amplitudes to the joint channels. The parameter $\epsilon$ controls the common scale. The energy and label-resolution gaps remain open along this path.

Let $P$ be the occupied projector, let $Q_s\le P$ be a smooth resolved subprojector, and write $L_s=\operatorname{Ran}Q_s$. In a local occupied frame, write the Wilczek--Zee connection and curvature \cite{WilczekZee1984} as $\cA$ and $\cF=\dd\cA+\cA\wedge\cA$. The covariant derivative is
\begin{equation}
D_{\cA}Q_s=\dd Q_s+[\cA,Q_s].
\end{equation}
Choose a frame adapted locally to $Q_s\oplus(P-Q_s)$. Including the factor $1/(2\pi\ii)$, so that the integral is a Chern number, the covariant Gauss--Codazzi identity is
\begin{align}
f_s&=\frac{1}{2\pi\ii}
\left\{\Tr(Q_s\cF)+\Tr\left[Q_s(D_{\cA}Q_s)^{\wedge2}\right]\right\},\nonumber\\
\int_X f_s&=c_1(L_s)[X].
\label{eqS:gauss-codazzi}
\end{align}
In the ambient Hilbert space,
\begin{equation}
\cF\leftrightarrow P(\dd P)^{\wedge2}P,
\qquad
D_{\cA}Q_s\leftrightarrow P(\dd Q_s)P.
\label{eqS:ambient-identities}
\end{equation}
For integer weights $w_s$, define $H_w=\sum_sw_sQ_s$. Then
\begin{align}
f_w&=\sum_sw_sf_s=f_{\rm proj}+\Xi_w,\nonumber\\
f_{\rm proj}&=\frac{1}{2\pi\ii}\Tr(H_w\cF),\nonumber\\
\Xi_w&=\frac{1}{2\pi\ii}\sum_sw_s\Tr\left[Q_s(D_{\cA}Q_s)^{\wedge2}\right].
\label{eqS:weighted-curvature}
\end{align}
For the binary weights used in the main text, Eq.~\eqref{eqS:weighted-curvature} reduces to Eq.~\eqref{eq:curvature-decomposition}.

On periodic $N\times N$ momentum grids, we diagonalize the full $8\times8$ Hamiltonian, form the rank-four occupied projector, and diagonalize $P(2Z_A+Z_B)P$ within the occupied space. Its nondegenerate eigenvectors define four rank-one projectors in the order $(++,+-,-+,--)$. Sector Chern integers are computed independently with the Fukui--Hatsugai--Suzuki plaquette formula \cite{FukuiHatsugaiSuzuki2005}. Projector derivatives in Eq.~\eqref{eqS:ambient-identities} are evaluated by Fourier spectral differentiation.

Here $\mathcal R(\bm k)=f_{AB}-f_{\rm proj}-\Xi_{AB}$. At $\epsilon=0$, $\Xi_{AB}$ vanishes to numerical precision. Along the mixing path it becomes finite and completes the determinant-line curvature. Table~\ref{tabS:curvature-convergence} summarizes the endpoint $\epsilon=6/5$.
\begin{table}[ht]
\caption{Spectral-grid convergence at $\epsilon=6/5$. The projected and Gauss--Codazzi contributions sum to the Chern number $4$.}
\label{tabS:curvature-convergence}
\begin{ruledtabular}
\begin{tabular}{rccc}
$N$ & $\int f_{\rm proj}$ & $\int\Xi_{AB}$ & $\|\mathcal R\|_\infty$\\
\hline
15 & 3.83 & 0.17 & $5\times10^{-3}$\\
51 & 3.84 & 0.16 & $7\times10^{-6}$\\
151 & 3.84 & 0.16 & $1\times10^{-13}$
\end{tabular}
\end{ruledtabular}
\end{table}

\section{Boundary and finite-time pumping}\label{app:pump}
\label{secS:edge-pump}

\subsection{Boundary anomaly and ribbon model}

The edge data are
\begin{align}
K&=\operatorname{diag}(1,-1,-1,1),\nonumber\\
t_A&=(1,1,-1,-1)^T,
& t_B&=(1,-1,1,-1)^T.
\end{align}
The mixed anomaly coefficient is $t_A^TK^{-1}t_B=4$. A complete set of symmetric Haldane null vectors would have to satisfy
\begin{equation}
\ell_i^TK^{-1}\ell_j=0,
\qquad
\ell_i^TK^{-1}t_A=\ell_i^TK^{-1}t_B=0.
\label{eqS:null-conditions}
\end{equation}
Such a complete isotropic set would force the anomaly pairing of $t_A$ and $t_B$ to vanish. Its nonzero value therefore obstructs a unique, symmetric, short-range-entangled edge. Breaking either label removes one neutrality constraint and permits a mass.

The real-space QWZ ribbon is
\begin{align}
H_M(k_y)=&\sum_{x=0}^{L_x-1}c_x^\dagger
\left[\sin k_y\,\sigma_y+(M+\cos k_y)\sigma_z+v_xI_2\right]c_x\nonumber\\
&+\sum_{x=0}^{L_x-2}\left[c_x^\dagger T_xc_{x+1}+\mathrm{H.c.}\right],
\qquad T_x=\frac{\sigma_z-\ii\sigma_x}{2}.
\label{eqS:full-ribbon}
\end{align}
For Fig.~\ref{fig:edge-pump}(a,b), we diagonalize the $8L_x\times8L_x$ ribbon Hamiltonian at $L_x=14$. A weak scalar disorder profile $v_x$, label-preserving edge potentials, and sector-dependent momentum offsets separate accidental crossings without breaking $U(1)_A\times U(1)_B$. We use
\begin{equation}
\delta_{ab}=\begin{cases}0,&ab=+1,\\ \pi,&ab=-1,\end{cases}
\label{eqS:sector-offsets}
\end{equation}
which is a gauge-equivalent relabeling of transverse momenta for even circumference and is introduced only to separate coincident edge crossings. The plotted states are selected from the full ribbon spectrum by their weight in the outer three layers. Adding a boundary $0.25\tau_x$ term gives a minimum absolute edge energy of approximately $0.15$, corresponding to an edge gap of $0.30$.

\subsection{Pump protocol and robustness}

The thermodynamic spectral-flow calculation on the open-$x$, periodic-$y$ cylinder transfers four units of $B$ charge while leaving the total and $A$ charges unchanged, in agreement with the finite-time protocol below. The boundary twist and momentum offsets are defined in Eq.~\eqref{eqS:sector-offsets} and the preceding discussion; the transported $B$ density is localized on the two open boundaries.

To connect the spectral-flow branch to a realizable protocol, we evolve every initially occupied one-particle state under
\begin{equation}
\ii\partial_t|\psi_j(t)\rangle=H[\theta_A(t)]|\psi_j(t)\rangle,
\qquad
\theta_A(t)=\theta_0+2\pi t/T.
\label{eqS:finite-time-schrodinger}
\end{equation}
The calculation shown in Fig.~\ref{fig:edge-pump} uses $\theta_0=0.37$, $L_x=10$, $L_y=4$, $720$ midpoint time steps, weak $x$-dependent scalar disorder, and all initially occupied one-particle states in each conserved-label sector. The offset places the initial Hamiltonian away from the small finite-size edge anticrossing. Each midpoint propagator is evaluated by exact matrix exponentiation. The evolved occupied orbitals define
\begin{equation}
P_{\rm evo}(t)=\sum_{j\in\mathrm{occ}}|\psi_j(t)\rangle\langle\psi_j(t)|.
\label{eqS:evolved-projector}
\end{equation}
The right-half observables are
\begin{align}
Q_{\rm tot}^R(t)&=\Tr[P_{\rm evo}(t)\Pi_R],\nonumber\\
Q_A^R(t)&=\Tr[P_{\rm evo}(t)\Pi_R Z_A],\nonumber\\
Q_B^R(t)&=\Tr[P_{\rm evo}(t)\Pi_R Z_B].
\label{eqS:finite-time-observable}
\end{align}
For $L_x=10$, $L_y=4$, $T=100$, and $720$ midpoint steps, the total and $A$ transfers are below $10^{-3}$, while the $B$ transfer is $4$.

Doubling the temporal resolution leaves $\Delta Q_B^R$ unchanged at the quoted precision. The shaded region in Fig.~\ref{fig:edge-pump}(d) contains the sampled ramp times between $20$ and $500$ for which $|\Delta Q_B^R-4|<0.02$.

The sampled offsets in Fig.~\ref{figS:pump-robustness}(d) form a plateau around the chosen value. Starting close to the finite-size crossing, as at $\theta_0=0.15$, moves the evolution outside the bulk-adiabatic, edge-diabatic window for this ramp time.

For this finite cylinder, the minimum bulk-like gap is
\begin{equation}
\Delta_{\rm bulk}\simeq2,
\end{equation}
whereas the minimum opposite-edge anticrossing is
\begin{equation}
\Delta_{\rm edge}\simeq2\times10^{-10}.
\end{equation}
These scales leave a broad ramp-time window that follows the bulk gap and crosses the opposite-edge anticrossing. Among the sampled ramp times from $20$ to $500$, the maximum observed deviation of $\Delta Q_B^R$ from four is $0.02$. The transferred density integrates to opposite edge-localized values close to $-4$ and $+4$.

The calculation with $v_x$ disorder exploits transverse translation. As an independent check, we also evolve the full real-space Hamiltonian with two-dimensional label-preserving disorder $v_{x,y}$ on an $8\times4$ cylinder. At disorder amplitude $0.02$ and $T=100$, the unwanted total and $A$ transfers are below $10^{-3}$, while the $B$ transfer is $4$. The full real-space calculation gives the same transfer without transverse-momentum block diagonalization.

At $T=100$, the four cylinders $8\times4$, $10\times4$, $12\times6$, and $14\times6$ all give $\Delta Q_B^R=4$ to the displayed precision. For twelve independent realizations of two-dimensional label-preserving disorder $v_{x,y}$ with amplitude $0.02$ on the $8\times4$ cylinder, the mean $B$ transfer is $4$, with a standard deviation below $10^{-3}$.
The opposite-edge hybridization of a clean Hamiltonian in the same phase obeys
\begin{equation}
\Delta_{\rm edge}(L_x)\propto e^{-L_x/\xi},
\qquad
\xi\simeq1.4,
\label{eqS:edge-exponential}
\end{equation}
so the bulk-adiabatic, edge-diabatic ramp-time window broadens with width.

\begin{figure*}[t]
\centering
\includegraphics[width=0.98\textwidth]{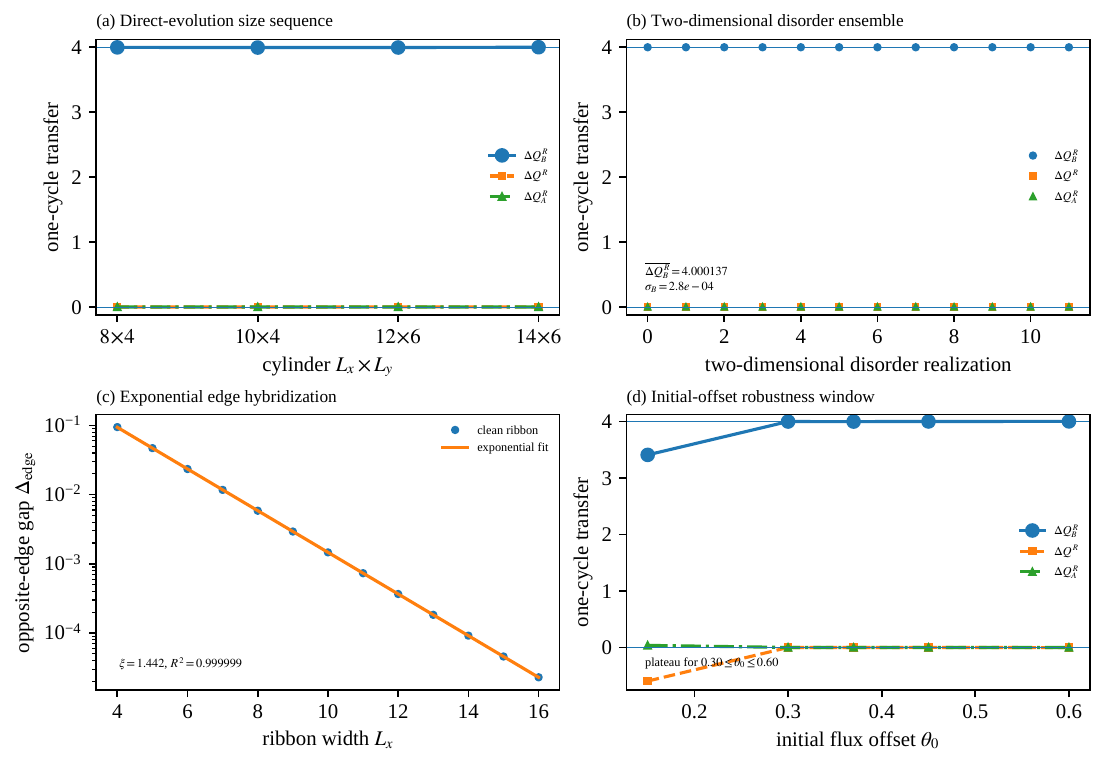}
\caption{\textbf{Robustness of crossed transport.}
(a) The finite-time $B$ transfer is near four across four cylinder sizes. (b) Twelve $v_{x,y}$ disorder realizations cluster tightly around four; the line and band show the mean and one standard deviation. (c) Opposite-edge hybridization decays exponentially with width in a clean Hamiltonian from the same phase, explaining why the bulk-adiabatic, edge-diabatic ramp window broadens with $L_x$. (d) Initial-flux-offset scan at fixed ramp time. The sampled plateau contains the chosen $\theta_0=0.37$ value; the point $\theta_0=0.15$ lies too close to the finite-size crossing for the chosen ramp time.}
\label{figS:pump-robustness}
\end{figure*}

\section{Proof of the factorization criterion}
\label{app:factorization-proof}

\begin{proof}[Proof of Theorem~\ref{thm:factorization}]
Cover $S^4$ by northern and southern four-balls. Since $H^2(S^4;\mathbb Z)=0$, the determinant line of $E$ is topologically trivial. After choosing a determinant trivialization, $E$ is specified by a clutching map
\begin{equation}
\begin{aligned}
g&:S^3\longrightarrow SU(pq),\\
W_3[g]&=\frac{1}{24\pi^2}\int_{S^3}\Tr(g^{-1}\dd g)^{\wedge3}=C_2(E).
\end{aligned}
\label{eq:appendix-clutching}
\end{equation}
Here $W_3$ identifies $\pi_3[SU(pq)]\simeq\mathbb Z$ with the second Chern number.

Suppose first that $E\simeq E_A\otimes E_B$. Because $H^2(S^4;\mathbb Z)=0$, the determinant lines of $E_A$ and $E_B$ are also trivial, and their clutching maps may be chosen as $g_A:S^3\to SU(p)$ and $g_B:S^3\to SU(q)$. Writing $\omega_A=g_A^{-1}\dd g_A$ and $\omega_B=g_B^{-1}\dd g_B$ gives
\begin{equation}
(g_A\otimes g_B)^{-1}\dd(g_A\otimes g_B)=\omega_A\otimes I_q+I_p\otimes\omega_B.
\end{equation}
Since $\Tr\omega_A=\Tr\omega_B=0$, the mixed terms in the cubic trace vanish, and Eq.~\eqref{eq:tensor-winding-main} follows. Hence every factorized bundle has
\begin{equation}
C_2(E)\in q\mathbb Z+p\mathbb Z=\gcd(p,q)\mathbb Z.
\label{eq:appendix-necessary}
\end{equation}
The tensor-product transition functions lie in $G_{p,q}$, so (ii) implies (i) and (iii).

Conversely, assume $C_2(E)=\gcd(p,q)m$. Choose B\'{e}zout integers $r,s$ such that $rp+sq=\gcd(p,q)$. Since $\pi_3[SU(n)]\simeq\mathbb Z$ for $n\ge2$, there exist rank-$p$ and rank-$q$ determinant-trivial bundles with clutching windings $sm$ and $rm$. Their tensor product has winding
\begin{equation}
q(sm)+p(rm)=\gcd(p,q)m=C_2(E).
\end{equation}
Complex $SU(pq)$ bundles over $S^4$ are classified by their clutching integer, so this tensor-product bundle is isomorphic to $E$. Thus (iii) implies (ii).

We next relate a reduction to $G_{p,q}$ to an honest tensor factorization. The kernel of
\begin{equation}
SU(p)\times SU(q)\longrightarrow G_{p,q}
\end{equation}
is the finite central group $\mu_d$, with $d=\gcd(p,q)$. The obstruction to lifting a $G_{p,q}$ reduction to an $SU(p)\times SU(q)$ bundle lies in $H^2(S^4;\mathbb Z_d)$, which vanishes. Therefore every reduction lifts and produces bundles $E_A,E_B$ with $E\simeq E_A\otimes E_B$, proving (i)$\Rightarrow$(ii).

For $p=q$, allowing the two factors to be exchanged adds a discrete $\mathbb Z_2$ choice. Its obstruction lies in $H^1(S^4;\mathbb Z_2)=0$, so ordered and unordered factorizations give the same criterion on $S^4$. Finally, for $p=q=2$, let $E_{\mathbb R}$ denote the underlying real bundle. Its fourth Stiefel--Whitney class is the mod-two reduction of $c_2(E)$,
\begin{equation}
c_2(E)\bmod2=w_4(E_{\mathbb R}).
\end{equation}
Evaluating this class on $[S^4]$ gives $\eta_{2,2}(E)$ \cite{MilnorStasheff1974}.
\end{proof}

\clearpage

\section{Eight-level realization, three-transmon compilation, and tomography}\label{app:eight-level}
\label{secS:full-realization}

\subsection{Eight-level model and second Chern number}

The surface and clutching calculations use eight-state Hilbert spaces with rank-four occupied multiplets. The two-dimensional basis is $\mathbb C_A^2\otimes\mathbb C_B^2\otimes\mathbb C_{\rm orb}^2$. For the four-dimensional model we choose a concrete superconducting encoding,
\begin{equation}
\mathcal H_{\rm amb}=\mathbb C_\kappa^2\otimes\mathbb C_\tau^2\otimes\mathbb C_\sigma^2
\simeq\mathcal H_{q_\kappa}\otimes\mathcal H_{q_\tau}\otimes\mathcal H_{q_\sigma},
\label{eqS:ambient-hardware}
\end{equation}
with computational states $|\kappa\tau\sigma\rangle\leftrightarrow|q_\kappa q_\tau q_\sigma\rangle$. A linear device with $q_\tau$ between $q_\kappa$ and $q_\sigma$ supplies all required two-qubit links.

A convenient Clifford representation in the $\tau\otimes\sigma$ sector is
\begin{equation}
\begin{aligned}
\Gamma_1&=\tau_x\sigma_x, &
\Gamma_2&=\tau_x\sigma_y, &
\Gamma_3&=\tau_x\sigma_z,\\
\Gamma_4&=\tau_y, &
\Gamma_5&=\tau_z.&&
\end{aligned}
\label{eqS:gamma-representation}
\end{equation}
They satisfy $\{\Gamma_\mu,\Gamma_\nu\}=2\delta_{\mu\nu}$. In the fixed ambient basis, the flattened odd Hamiltonian has the compact Pauli-string form
\begin{equation}
\begin{aligned}
H_{\rm odd}^{\rm flat}(\bm n)
={}&\frac{I_2+\kappa_z}{2}\Big(
 n_1\tau_x\sigma_x+n_2\tau_x\sigma_y\\
&\hspace{2.2cm}+n_3\tau_x\sigma_z+n_4\tau_y+n_5\tau_z\Big)\\
&-\frac{I_2-\kappa_z}{2}\tau_z.
\end{aligned}
\label{eqS:odd-flat-explicit}
\end{equation}
The first $\kappa_z=+1$ block is the flattened Yang Hamiltonian. In the $\kappa_z=-1$ block, $-\tau_z$ supplies two occupied and two unoccupied spectator states. Equation~\eqref{eqS:odd-flat-explicit} expands into eight off-diagonal pair couplings and two diagonal detunings.

In the ambient basis of Eq.~\eqref{eqS:ambient-hardware}, the interblock-coupled Hamiltonian is
\begin{equation}
H_\lambda(\bm n)=H_{\rm odd}^{\rm flat}(\bm n)+\lambda V_{\rm cpl}(\bm n),
\label{eqS:full-structured-H}
\end{equation}
with
\begin{equation}
V_{\rm cpl}(\bm n)=\frac{1}{2}\left(\frac{7}{20}M_0+\frac{1}{\sqrt5}\sum_{\mu=1}^{5}n_\mu M_\mu\right),
\label{eqS:structured-s4-coupling}
\end{equation}
where
\begin{align}
(M_0,\ldots,M_5)=(&\kappa_x,\kappa_y\tau_x,\kappa_y\tau_z\sigma_x,\nonumber\\
&\kappa_x\tau_y\sigma_z,\kappa_z\tau_x\sigma_y,\tau_y\sigma_x).
\label{eqS:structured-pauli-strings}
\end{align}
Every matrix is traceless, Hermitian, and has operator norm one. Equations~\eqref{eqS:odd-flat-explicit} and \eqref{eqS:structured-pauli-strings} define the finite-level Hamiltonian used below. In the ordered basis $|\kappa\tau\sigma\rangle$, the six $M_\mu$ define conditional pair couplings, while Eq.~\eqref{eqS:odd-flat-explicit} supplies the Yang-block couplings and two diagonal detunings. 
By the triangle and Cauchy--Schwarz inequalities,
\begin{align}
\|V_{\rm cpl}(\bm n)\|&\le\frac12\left(\frac{7}{20}+\frac{1}{\sqrt5}\sum_\mu|n_\mu|\right)\nonumber\\
&\le\frac{27}{40}\simeq0.68.
\label{eqS:operator-bound}
\end{align}
The unperturbed flattened occupied and unoccupied eigenvalues are $-1$ and $+1$. Weyl's inequality therefore gives, uniformly for every $\bm n\in S^4$,
\begin{equation}
\Delta_\lambda(\bm n)\ge2-\frac{27}{20}|\lambda|.
\label{eqS:global-gap-bound}
\end{equation}
At $\lambda=5/4$, the analytic lower bound is $5/16$, so the interpolation is gapped and homotopy invariance fixes $C_2=1$. A direct sphere sample gives a minimum gap of approximately $1.2$, and numerical integration verifies the second Chern number.

\subsection{Native-gate compilation on a three-transmon chain}

Every term in Eqs.~\eqref{eqS:odd-flat-explicit} and \eqref{eqS:structured-pauli-strings} is a Pauli word of weight at most three. Let $\mathsf P=\bigotimes_jP_j$ be a Pauli word with nontrivial support $S$ and weight $w=|S|$. Choose single-qubit basis changes $B_X=H$, $B_Y=HS^\dagger$, and $B_Z=I$, and set $B(\mathsf P)=\bigotimes_j B_{P_j}$ on the support, so that $B(\mathsf P)\mathsf P B(\mathsf P)^\dagger=\prod_{j\in S}Z_j$. If $t\in S$ is a target and
\begin{equation}
C_S=\prod_{j\in S\setminus\{t\}}\mathrm{CNOT}_{j\rightarrow t},
\end{equation}
then the exact Pauli rotation is
\begin{equation}
\exp\left(-\frac{\ii\theta}{2}\mathsf P\right)
=B(\mathsf P)^\dagger C_S^\dagger R_z^{(t)}(\theta)C_SB(\mathsf P).
\label{eqS:pauli-rotation-compilation}
\end{equation}
It uses $2(w-1)$ CNOT-equivalent entanglers and only local gates otherwise. Figure~\ref{fig:transmon-circuits}(a) gives the explicit circuit for $\mathsf P=M_2=YZX$: the state first encounters $S^\dagger$ followed by $H$ on $q_\kappa$ and $H$ on $q_\sigma$, the two adjacent CNOTs accumulate the $ZZZ$ parity on $q_\tau$, and the sequence is reversed after $R_z(\theta)$. Fixed-frequency transmon processors provide calibrated all-microwave entanglers that are locally equivalent to CNOT, as well as more general native two-qubit gates \cite{Chow2011,Wei2024}.

\begin{center}
\refstepcounter{table}\label{tabS:transmon-compilation}
\parbox{\columnwidth}{\small\textbf{TABLE~\thetable.} Compilation of the six interblock controls on the ordered chain $(q_\kappa,q_\tau,q_\sigma)$. The last column counts CNOT-equivalent two-qubit entanglers for one exact Pauli rotation.}\par\vspace{3pt}
\begin{ruledtabular}
\begin{tabular}{cccc}
control & Pauli word & weight & $N_{2q}$\\
\hline
$M_0$ & $XII$ & $1$ & $0$\\
$M_1$ & $YXI$ & $2$ & $2$\\
$M_2$ & $YZX$ & $3$ & $4$\\
$M_3$ & $XYZ$ & $3$ & $4$\\
$M_4$ & $ZXY$ & $3$ & $4$\\
$M_5$ & $IYX$ & $2$ & $2$
\end{tabular}
\end{ruledtabular}
\end{center}

For the three weight-three controls, choosing $q_\tau$ as the parity target uses the two adjacent device edges and requires no SWAP gates. The two weight-two terms lie on those edges as well. The ten Pauli words in $H_{\rm odd}^{\rm flat}$ have weights one, two, or three and are compiled by the same construction. Thus the digital realization needs no native three-body interaction. A product formula can synthesize the Hamiltonian path. For tomography, each occupied eigenvector can instead be compiled directly into a three-qubit preparation circuit, avoiding a sequence of Hamiltonian rotations.

Parameterize $S^4$ by $\bm y=(\alpha,\beta,\gamma,\phi)$:
\begin{equation}
\begin{aligned}
n_1&=\sin\alpha\sin\beta\sin\gamma\cos\phi,\qquad
n_2=\sin\alpha\sin\beta\sin\gamma\sin\phi,\\
n_3&=\sin\alpha\sin\beta\cos\gamma,\qquad
n_4=\sin\alpha\cos\beta,\\
n_5&=\cos\alpha.
\end{aligned}
\label{eqS:s4-coordinates}
\end{equation}
For an isolated occupied multiplet with eigenstates $|i\rangle$ and unoccupied states $|a\rangle$, projector derivatives are evaluated without choosing a gauge:
\begin{equation}
\partial_\mu P=
\sum_{i\in\mathrm{occ}}\sum_{a\in\mathrm{unocc}}
\frac{|a\rangle\langle a|\partial_\mu H|i\rangle\langle i|}{E_i-E_a}
+\mathrm{H.c.}
\label{eqS:projector-derivative}
\end{equation}
Set $F_{\mu\nu}=P[\partial_\mu P,\partial_\nu P]P$. With the chosen orientation,
\begin{equation}
C_2=-\frac{1}{4\pi^2}\int\dd^4y\,
\Tr\!\left(F_{12}F_{34}-F_{13}F_{24}+F_{14}F_{23}\right).
\label{eqS:direct-c2}
\end{equation}
Gauss--Legendre quadrature is used for $\alpha,\beta,\gamma$ and a uniform midpoint grid for $\phi$. At $\lambda=5/4$, the eighth-order quadrature gives $C_2=1$; convergence is shown in Fig.~\ref{fig:full-reduction}(a). The analytic bound keeps the gap open along the path, and the block commutator grows with $\lambda$, showing that exact block conservation is broken.

The additive perturbation generally splits the four occupied energies. Numerical $C_2$ evaluation and occupied-subspace tomography use only the isolated rank-four projector $P_\lambda$. Reflattening gives an exactly degenerate Hamiltonian with the same occupied projector when a Wilczek--Zee or holonomic interpretation is desired:
\begin{equation}
\widetilde H_\lambda(\bm n)=I_8-2P_\lambda(\bm n).
\label{eqS:reflattened-hamiltonian}
\end{equation}
It preserves the interblock-coupled projector, $C_2$, the transition function, and the factorization obstruction while restoring occupied energy $-1$ and unoccupied energy $+1$.

\subsection{Clutching map and winding reconstruction}\label{app:tomography}

For the standard north--south gauges of the Yang bundle,
\begin{equation}
u(\bm x)=x_4I_2+\ii\sum_{j=1}^{3}x_j\sigma_j,
\qquad \bm x\in S^3.
\label{eqS:yang-clutching}
\end{equation}
The odd map $u\oplus I_2$ has winding one, whereas the pointwise factorized map $u\otimes I_2$ has winding two. Their residues modulo two distinguish the odd bundle from the factorized reference.

The analytic maps $u\oplus I_2$ and $u\otimes I_2$ provide reference clutching maps, while the reconstruction shown in Fig.~\ref{fig:full-reduction}(c) is performed on the Pauli-string-coupled occupied projector. Let $R_N$ and $R_S$ be fixed four-column reference frames at the north and south poles. Wherever $R_\nu^\dagger P R_\nu$ is positive, polar projection gives an orthonormal occupied frame
\begin{equation}
U_\nu=P R_\nu(R_\nu^\dagger P R_\nu)^{-1/2}.
\label{eqS:polar-patch-frame}
\end{equation}
On the equator the transition matrix
\begin{equation}
G_{NS}=U_N^\dagger U_S\in U(4)
\label{eqS:tomographic-transition}
\end{equation}
is reconstructed from the two patch frames. The $U(4)$-valued transition matrix is kept without determinant normalization. Because $\pi_3[U(1)]=0$, the determinant component does not contribute, and the standard trace integral computes the $\pi_3[U(4)]\simeq\mathbb Z$ winding directly.

\begin{proposition}[Gauge invariance of the clutching winding]
Let $U_N,U_S$ be regular frames on the two closed hemispheres and let $G_{NS}=U_N^\dagger U_S$ on their common $S^3$ boundary. Under arbitrary regular patch-frame changes $U_N\mapsto U_Nh_N$ and $U_S\mapsto U_Sh_S$, the winding $W_3[G_{NS}]$ is unchanged.
\end{proposition}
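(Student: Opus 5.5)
The proof reduces to two standard facts. First, the winding functional $W_3[g]=\frac{1}{24\pi^2}\int_{S^3}\Tr(g^{-1}\dd g)^{\wedge3}$ changes additively under multiplication by a map of zero winding. Second, $W_3$ is constant on homotopy classes. I would begin by writing the new transition function. On the equator, $G'_{NS}=(U_Nh_N)^\dagger U_Sh_S=h_N^\dagger G_{NS}h_S$, with $h_N,h_S$ regular $U(4)$-valued maps on the closed northern and southern four-balls, restricted to their common boundary $S^3$.

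The key step is that each restriction $h_\nu|_{S^3}$ extends continuously over a four-ball. It is therefore null-homotopic as a map $S^3\to U(4)$: contract radially, $h_\nu(t\bm x)$ for $t\in[0,1]$, to the constant $h_\nu(0)$. Hence $W_3[h_\nu|_{S^3}]=0$, and the same holds for $h_N^\dagger=h_N^{-1}$.

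Next I would use that $W_3$ is a homomorphism on pointwise products:
\begin{equation*}
W_3[fg]=W_3[f]+W_3[g].
\end{equation*}
There are two routes. Homotopically, in a topological group pointwise multiplication induces the group operation on $\pi_3$, by the Eckmann--Hilton argument. Analytically, one expands $(fg)^{-1}\dd(fg)=g^{-1}(f^{-1}\dd f)g+g^{-1}\dd g$. The cross terms in the cubic trace then combine into an exact form, $\dd\Tr(f^{-1}\dd f\wedge \dd g\, g^{-1})$ up to a constant, which integrates to zero on the closed $S^3$. Because the determinant factor contributes nothing ($\pi_3[U(1)]=0$), this applies directly to the $U(4)$-valued maps without determinant normalization. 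Combining the two steps gives $W_3[G'_{NS}]=W_3[G_{NS}]$.

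The main obstacle is justifying that the patch-frame changes really extend over the closed hemispheres. If some $h_\nu$ were defined only near the equator, it could carry a nonzero winding. This is exactly why the statement requires "regular" frames on the closed hemispheres: regularity supplies the extension, and the contraction argument then gives zero winding. A smaller point is the discretized setting used in the tomography. There one needs the polar frames $U_\nu$ to be defined, meaning $R_\nu^\dagger PR_\nu>0$, on the entire hemisphere rather than only on the equator. I would state this as the standing assumption.
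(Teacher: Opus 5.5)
Your proof is correct and is essentially the paper's argument. Both use the transformation $G_{NS}\mapsto h_N^\dagger G_{NS}h_S$, then additivity of $W_3$ under pointwise products because the cross term in the cubic trace is exact on the closed $S^3$ (the Polyakov--Wiegmann identity the paper invokes), and finally $W_3[h_\nu]=0$ because each $h_\nu$ extends over its hemisphere and is therefore null-homotopic on the boundary; your explicit exact form and the Eckmann--Hilton alternative only spell out the additivity step in more detail.
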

\begin{proof}
The transition map transforms as $G_{NS}\mapsto h_N^\dagger G_{NS}h_S$. On a closed three-manifold, the Polyakov--Wiegmann identity gives
\begin{equation}
W_3[h_N^\dagger G_{NS}h_S]
=W_3[G_{NS}]-W_3[h_N]+W_3[h_S],
\label{eqS:PW-patch}
\end{equation}
because the remaining cross term is exact. Each $h_\nu$ extends over its hemisphere $B^4$, so its restriction to the boundary is null-homotopic and $W_3[h_\nu]=0$. Hence $W_3$ is independent of the reference frames, polar construction, and smooth north--south frame changes. The integer $W_3$ is a property of the occupied clutching class; the candidate factorization enters when it is reduced modulo the image $q\mathbb Z+p\mathbb Z$.
\end{proof}

Clean and noisy reconstructions use a common discretization. Parametrize the equatorial $S^3$ by two polar angles $\vartheta,\theta\in[0,\pi]$ and one periodic azimuth $\phi\in[0,2\pi)$. For order $N$, use $N$ mapped Gauss--Legendre nodes in each polar coordinate and $2N$ equispaced azimuthal nodes. The transition matrix is reconstructed once at every point, producing a single array with
\begin{equation}
\begin{aligned}
N_{\rm eq}=N_P=N_G&=2N^3,\\
N_{\rm pole}&=
\begin{cases}
0, & \text{pre-calibrated references},\\
2, & \text{measured once}.
\end{cases}
\end{aligned}
\label{eqS:sample-accounting}
\end{equation}
Here $N_P$ and $N_G$ are the numbers of equatorial projector reconstructions and transition matrices. Optional pole references add two one-time projector reconstructions. All derivatives are formed from the same equatorial grid: polar derivatives use global barycentric differentiation on the Gauss--Legendre nodes, and the periodic derivative uses the discrete Fourier transform. With $A_i=G_{NS}^\dagger\partial_iG_{NS}$, the quadrature is
\begin{equation}
\widehat W_3^{\rm grid}
=-\frac{1}{8\pi^2}
\sum_{ijk}w_i^{\vartheta}w_j^{\theta}w^{\phi}
\operatorname{Re}\Tr\!\left[A_\vartheta[A_\theta,A_\phi]\right]_{ijk}.
\label{eqS:grid-winding}
\end{equation}
The formula is applied without modification to the clean and finite-shot reconstructions.

At $N=6$, $432$ projector reconstructions per bundle give windings that converge to $1$ and $2$ for the odd bundle and factorized reference. A Sobol scan of approximately $2\times10^5$ points gives positive overlap eigenvalues throughout the sampled hemispheres, with minima $0.36$ and $0.32$. This provides a numerical conditioning check.

\begin{figure*}[t]
\centering
\includegraphics[width=0.82\textwidth]{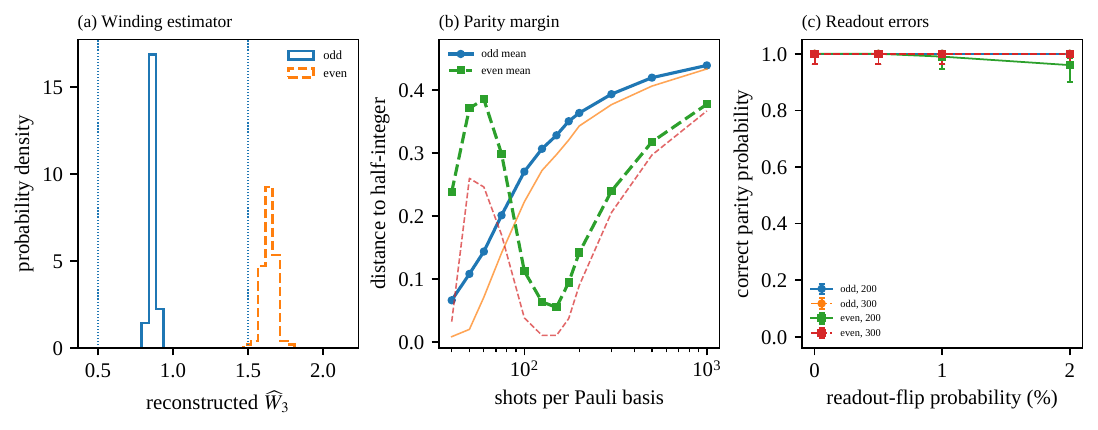}
\caption{\textbf{Three-transmon finite-shot tomography and readout errors.}
(a) Reconstructed winding distributions at $200$ pooled shots per Pauli basis. For the $N=4$ grid this is approximately $7\times10^5$ shots per bundle. The dashed half-integers are the calibrated parity boundaries; the shift from the clean integers shows the coarse-grid estimator bias. (b) Distance to the nearest half-integer decision boundary versus shot budget. (c) Correct parity under independent symmetric readout flips on the three transmon readout bits. At $300$ shots per basis, both the odd bundle and the factorized reference are classified correctly in all $100$ trials through $2\%$ confusion.}
\label{figS:tomography-systematics}
\end{figure*}

\subsection{Finite-shot tomography}

At each parameter point, tomography reconstructs the rank-four occupied projector $P$. Spectral flattening makes the occupied quartet exactly degenerate. For a three-qubit proof of principle, the $8\times8$ Hamiltonian is diagonalized classically and the four occupied eigenvectors are compiled offline into exact state-preparation circuits \cite{Mottonen2005}. Pooling equal numbers of shots prepares the ensemble
\begin{equation}
\rho_{\rm occ}=\frac{P}{4}=\frac14\sum_{j=1}^{4}|u_j\rangle\langle u_j|.
\label{eqS:rho-occ}
\end{equation}
With unrestricted pairwise CNOT connectivity, any pure three-qubit state can be prepared from $|000\rangle$ using at most three CNOTs; on the linear chain used here, a fourth may be required in the worst case \cite{Znidaric2008}. The worst-case CNOT count for direct preparation is therefore no larger than that of one weight-three Pauli rotation, and it avoids a product-formula sequence of Hamiltonian rotations.

A device can reproduce this ensemble by selecting $j$ uniformly for each shot. A more convenient implementation assigns $n_s/4$ shots to each of the four preparation circuits and pools the counts. To compare the two procedures, let $\bm p_j$ be the eight-outcome distribution of $|u_j\rangle$ in one measurement basis and $\bar{\bm p}=\frac14\sum_j\bm p_j$. The covariance matrices of the empirical frequencies obey
\begin{equation}
\operatorname{Cov}_{\rm mix}-\operatorname{Cov}_{\rm strat}
 =\frac{1}{n_s}\left(\frac14\sum_{j=1}^{4}\bm p_j\bm p_j^{\mathsf T}
 -\bar{\bm p}\bar{\bm p}^{\mathsf T}\right)\succeq0.
\label{eqS:stratified-covariance}
\end{equation}
Equal-shot pooling is therefore unbiased and has no larger sampling covariance than the mixed-state multinomial simulation. Once $\rho_{\rm occ}$ is reconstructed, $P=4\rho_{\rm occ}$ in the ideal limit.

Using the three-transmon basis in Eq.~\eqref{eqS:ambient-hardware}, the finite-shot protocol employs the $3^3=27$ tensor-product Pauli bases $X,Y,Z$. Figure~\ref{fig:transmon-circuits}(b) shows one member of this circuit family. The measurement stage contains only the local rotations $R_X=H$, $R_Y=HS^\dagger$, and $R_Z=I$, followed by joint computational-basis readout, as demonstrated in the three-qubit superconducting tomography of DiCarlo \textit{et al.} \cite{DiCarlo2010}. Each setting has eight multinomial outcomes and the full set determines all $63$ nontrivial Pauli coefficients. From the estimated coefficients we form the linear-inversion matrix
\begin{equation}
\rho_{\rm LI}=\frac18\sum_{\alpha,\beta,\gamma\in\{I,X,Y,Z\}}
\widehat r_{\alpha\beta\gamma}\,
\sigma_\alpha\otimes\sigma_\beta\otimes\sigma_\gamma,
\label{eqS:pauli-linear-inversion}
\end{equation}
and define $P_{\rm rec}$ as the projector onto its four largest-eigenvalue eigenvectors. The estimator is linear inversion \cite{James2001} followed by rank-four spectral projection; the latter returns the nearest rank-four projector in Frobenius norm. Two additional projector reconstructions suffice if the pole reference subspaces are not pre-calibrated. At each equatorial point, the reconstructed $P$ determines both polar frames and one overlap matrix $G_{NS}$.

We simulate multinomial shot noise with otherwise ideal state preparation, control axes, and readout. Counts are drawn in all $27$ Pauli-basis settings, $P_{\rm rec}$ is reconstructed with Eq.~\eqref{eqS:pauli-linear-inversion}, and the resulting $G_{NS}$ is evaluated with the same $N=4$ single-grid winding estimator used for clean convergence. There are $128$ equatorial points for each bundle and $100$ independent reconstructions per shot budget. If $n_s$ denotes the total shots pooled over the four occupied preparations in one Pauli basis, the budget is $27n_s$ shots per parameter point and $128\times27n_s$ equatorial shots per bundle.

At $n_s=200$, one bundle uses $128\times27=3456$ parameter--basis combinations and $691{,}200$ shots. Preparing the four occupied states separately gives $13{,}824$ circuits at $50$ shots each. The odd bundle and factorized reference together therefore use $27{,}648$ circuits and $1{,}382{,}400$ shots; optional tomography of two pole references adds $10{,}800$ shots per bundle. This is a sampling workload on three qubits, with no entangling gates in the measurement bases. Both residues are identified correctly in all $100$ simulated trials at $n_s=200$, and the mean subspace fidelities are approximately $0.98$ for both bundles.

At this grid resolution the estimator of the integer winding is biased, so we report the parity and its distance from the nearest decision boundary,
\begin{equation}
\widehat\eta=\operatorname{round}(\widehat W_3)\bmod2,
\qquad
m_{\rm parity}=\min_{n\in\mathbb Z}\left|\widehat W_3-\left(n+\tfrac12\right)\right|.
\label{eqS:parity-margin}
\end{equation}
At $n_s=200$, the mean windings are approximately $0.86$ and $1.64$, with mean distances $0.36$ and $0.14$ from the nearest half-integer boundary. The winding distributions and margin statistics are shown in Fig.~\ref{figS:tomography-systematics}(a,b).

Independent symmetric readout errors are added to the three transmon bits. Each measured bit is flipped with probability $r$, while reconstruction deliberately uses the ideal readout model. Figure~\ref{figS:tomography-systematics}(c) therefore gives an unmitigated baseline. A device run would calibrate the joint assignment matrix and can incorporate readout-error mitigation already demonstrated in superconducting state tomography \cite{Aasen2024}. Device-specific coherent, leakage, drift, and correlated-readout errors are outside the present sampling model.

\FloatBarrier
\section*{Data availability}
The numerical data and source code supporting this work are available in the authors' \href{https://github.com/IKEDAKAZUKI/Entangling-Topological-Invariants}{GitHub repository}.

\section*{Acknowledgments}
This work was supported by the NSF under Grant No. OSI-2328774 (KI), and by the Israeli Science Foundation Excellence Center, the US-Israel Binational Science Foundation, and the Israel Ministry of Science (YO).

\bibliography{ref_v4}

@article{WilczekZee1984,
  author = {Wilczek, Frank and Zee, A.},
  title = {Appearance of Gauge Structure in Simple Dynamical Systems},
  journal = {Phys. Rev. Lett.},
  volume = {52},
  pages = {2111--2114},
  year = {1984},
  doi = {10.1103/PhysRevLett.52.2111}
}

@article{Sheng2006,
  author = {Sheng, D. N. and Weng, Z. Y. and Sheng, L. and Haldane, F. D. M.},
  title = {Quantum Spin-{Hall} Effect and Topologically Invariant {Chern} Numbers},
  journal = {Phys. Rev. Lett.},
  volume = {97},
  pages = {036808},
  year = {2006},
  doi = {10.1103/PhysRevLett.97.036808}
}

@article{Prodan2009,
  author = {Prodan, Emil},
  title = {Robustness of the Spin-{Chern} Number},
  journal = {Phys. Rev. B},
  volume = {80},
  pages = {125327},
  year = {2009},
  doi = {10.1103/PhysRevB.80.125327}
}

@misc{Wang2023Feature,
  author = {Wang, Baokai and Hung, Yi-Chun and Zhou, Xiaoting and Ong, Tzen and Lin, Hsin},
  title = {Feature Spectrum Topology and Bulk Boundary Correspondence in Energy and Projective Operator Spectra},
  howpublished = {Commun. Phys.},
  year = {2026},
  note = {published online 30 July 2026},
  doi = {10.1038/s42005-026-02705-5}
}

@misc{Hung2026Nested,
  author = {Hung, Yi-Chun and Ong, T. Tzen and Lin, Hsin},
  title = {Nested Feature Spectrum Topology: Tripartite Topological Equivalence of Feature, Entanglement, and {Wilson} Loop Spectrum},
  year = {2026},
  eprint = {2603.13128},
  archivePrefix = {arXiv},
  primaryClass = {cond-mat.mes-hall}
}

@article{Niu1985,
  author = {Niu, Qian and Thouless, D. J. and Wu, Yong-Shi},
  title = {Quantized {Hall} Conductance as a Topological Invariant},
  journal = {Phys. Rev. B},
  volume = {31},
  pages = {3372--3377},
  year = {1985},
  doi = {10.1103/PhysRevB.31.3372}
}

@article{QiWuZhang2006,
  author = {Qi, Xiao-Liang and Wu, Yong-Shi and Zhang, Shou-Cheng},
  title = {Topological Quantization of the Spin {Hall} Effect in Two-Dimensional Paramagnetic Semiconductors},
  journal = {Phys. Rev. B},
  volume = {74},
  pages = {085308},
  year = {2006},
  doi = {10.1103/PhysRevB.74.085308}
}

@article{Yang1978,
  author = {Yang, C. N.},
  title = {Generalization of {Dirac}'s Monopole to {$\mathrm{SU}_{2}$} Gauge Fields},
  journal = {J. Math. Phys.},
  volume = {19},
  pages = {320--328},
  year = {1978},
  doi = {10.1063/1.523506}
}

@article{Sugawa2018,
  author = {Sugawa, Seiji and Salces-Carcoba, Francisco and Perry, Abigail R. and Yue, Yuchen and Spielman, Ian B.},
  title = {Second {Chern} Number of a Quantum-Simulated non-{Abelian} {Yang} Monopole},
  journal = {Science},
  volume = {360},
  pages = {1429--1434},
  year = {2018},
  doi = {10.1126/science.aam9031}
}

@article{Lohse2018,
  author = {Lohse, Michael and Schweizer, Christian and Price, Hannah M. and Zilberberg, Oded and Bloch, Immanuel},
  title = {Exploring {4D} Quantum {Hall} Physics with a {2D} Topological Charge Pump},
  journal = {Nature},
  volume = {553},
  pages = {55--58},
  year = {2018},
  doi = {10.1038/nature25000}
}

@article{Zilberberg2018,
  author = {Zilberberg, Oded and Huang, Sheng and Guglielmon, Jonathan and Wang, Mohan and Chen, Kevin P. and Kraus, Yaacov E. and Rechtsman, Mikael C.},
  title = {Photonic Topological Boundary Pumping as a Probe of {4D} Quantum {Hall} Physics},
  journal = {Nature},
  volume = {553},
  pages = {59--62},
  year = {2018},
  doi = {10.1038/nature25011}
}

@misc{IkedaEntanglementGeometry,
  author = {Ikeda, Kazuki},
  title = {Quantum Entanglement Geometry on {Severi--Brauer} Schemes: Subsystem Reductions of {Azumaya} Algebras},
  year = {2026},
  eprint = {2601.13764},
  archivePrefix = {arXiv},
  primaryClass = {math.AG}
}

@misc{IkedaOzHolonomy,
  author = {Ikeda, Kazuki and Oz, Yaron},
  title = {Loop-Dependent Entangling Holonomies in Localized Topological Quartets},
  year = {2026},
  eprint = {2604.11596},
  archivePrefix = {arXiv},
  primaryClass = {cond-mat.mes-hall}
}

@misc{Ershov2008,
  author = {Ershov, A. V.},
  title = {Theories of Bundles with Additional Homotopy Conditions},
  year = {2008},
  eprint = {0804.1119},
  archivePrefix = {arXiv},
  primaryClass = {math.KT}
}

@article{Ezawa2014,
  author = {Ezawa, Motohiko},
  title = {Symmetry Protected Topological Charge in Symmetry Broken Phase: Spin-{Chern}, Spin-Valley-{Chern} and Mirror-{Chern} Numbers},
  journal = {Phys. Lett. A},
  volume = {378},
  pages = {1180--1184},
  year = {2014},
  doi = {10.1016/j.physleta.2014.02.014}
}

@article{ZengShengZhu2019,
  author = {Zeng, Tian-Sheng and Sheng, D. N. and Zhu, W.},
  title = {Topological Characterization of Hierarchical Fractional Quantum {Hall} Effects in Topological Flat Bands with {SU(N)} Symmetry},
  journal = {Phys. Rev. B},
  volume = {100},
  pages = {075106},
  year = {2019},
  doi = {10.1103/PhysRevB.100.075106}
}

@article{ZengZhu2022,
  author = {Zeng, Tian-Sheng and Zhu, W.},
  title = {{Chern}-Number Matrix of the non-{Abelian} Spin-Singlet Fractional Quantum {Hall} Effect},
  journal = {Phys. Rev. B},
  volume = {105},
  pages = {125128},
  year = {2022},
  doi = {10.1103/PhysRevB.105.125128}
}

@article{FukuiHatsugaiSuzuki2005,
  author = {Fukui, Takahiro and Hatsugai, Yasuhiro and Suzuki, Hiroshi},
  title = {{Chern} Numbers in Discretized {Brillouin} Zone: Efficient Method of Computing (Spin) {Hall} Conductances},
  journal = {J. Phys. Soc. Jpn.},
  volume = {74},
  pages = {1674--1677},
  year = {2005},
  doi = {10.1143/JPSJ.74.1674}
}

@article{Haldane1995,
  author = {Haldane, F. D. M.},
  title = {Stability of Chiral {Luttinger} Liquids and {Abelian} Quantum {Hall} States},
  journal = {Phys. Rev. Lett.},
  volume = {74},
  pages = {2090--2093},
  year = {1995},
  doi = {10.1103/PhysRevLett.74.2090}
}

@article{Thouless1983,
  author = {Thouless, D. J.},
  title = {Quantization of Particle Transport},
  journal = {Phys. Rev. B},
  volume = {27},
  pages = {6083--6087},
  year = {1983},
  doi = {10.1103/PhysRevB.27.6083}
}

@misc{Ershov2003,
  author = {Ershov, A. V.},
  title = {Homotopy Theory of Bundles with Fiber Matrix Algebra},
  year = {2003},
  eprint = {math/0301151},
  archivePrefix = {arXiv},
  primaryClass = {math.AT}
}

@article{James2001,
  author = {James, Daniel F. V. and Kwiat, Paul G. and Munro, William J. and White, Andrew G.},
  title = {Measurement of Qubits},
  journal = {Phys. Rev. A},
  volume = {64},
  pages = {052312},
  year = {2001},
  doi = {10.1103/PhysRevA.64.052312}
}

@book{MilnorStasheff1974,
  author = {Milnor, John W. and Stasheff, James D.},
  title = {Characteristic Classes},
  series = {Annals of Mathematics Studies},
  number = {76},
  publisher = {Princeton University Press},
  address = {Princeton, NJ},
  year = {1974}
}

@article{Znidaric2008,
  author = {{\v{Z}}nidari{\v{c}}, Marko and Giraud, Olivier and Georgeot, Bertrand},
  title = {Optimal Number of Controlled-{NOT} Gates to Generate a Three-Qubit State},
  journal = {Phys. Rev. A},
  volume = {77},
  pages = {032320},
  year = {2008},
  doi = {10.1103/PhysRevA.77.032320}
}

@article{DiCarlo2010,
  author = {DiCarlo, L. and Reed, M. D. and Sun, L. and Johnson, B. R. and Chow, J. M. and Gambetta, J. M. and Frunzio, L. and Girvin, S. M. and Devoret, M. H. and Schoelkopf, R. J.},
  title = {Preparation and Measurement of Three-Qubit Entanglement in a Superconducting Circuit},
  journal = {Nature},
  volume = {467},
  pages = {574--578},
  year = {2010},
  doi = {10.1038/nature09416}
}

@article{Chow2011,
  author = {Chow, J. M. and C{\'o}rcoles, A. D. and Gambetta, J. M. and Rigetti, C. and Johnson, B. R. and Smolin, J. A. and Rozen, J. R. and Keefe, G. A. and Rothwell, M. B. and Ketchen, M. B. and Steffen, M.},
  title = {Simple All-Microwave Entangling Gate for Fixed-Frequency Superconducting Qubits},
  journal = {Phys. Rev. Lett.},
  volume = {107},
  pages = {080502},
  year = {2011},
  doi = {10.1103/PhysRevLett.107.080502}
}

@article{Wei2024,
  author = {Wei, Ken Xuan and Lauer, Isaac and Pritchett, Emily and Shanks, William and McKay, David C. and Javadi-Abhari, Ali},
  title = {Native Two-Qubit Gates in Fixed-Coupling, Fixed-Frequency Transmons Beyond Cross-Resonance Interaction},
  journal = {PRX Quantum},
  volume = {5},
  pages = {020338},
  year = {2024},
  doi = {10.1103/PRXQuantum.5.020338}
}

@article{Aasen2024,
  author = {Aasen, Adrian Skasberg and Di Giovanni, Andras and Rotzinger, Hannes and Ustinov, Alexey V. and G{\"a}rttner, Martin},
  title = {Readout Error Mitigated Quantum State Tomography Tested on Superconducting Qubits},
  journal = {Commun. Phys.},
  volume = {7},
  pages = {301},
  year = {2024},
  doi = {10.1038/s42005-024-01790-8}
}

@article{Mottonen2005,
  author = {M{\"o}tt{\"o}nen, Mikko and Vartiainen, Juha J. and Bergholm, Ville and Salomaa, Martti M.},
  title = {Transformation of Quantum States Using Uniformly Controlled Rotations},
  journal = {Quantum Inf. Comput.},
  volume = {5},
  number = {6},
  pages = {467--473},
  year = {2005},
  doi = {10.26421/QIC5.6-5}
}

\end{document}